\def\BibTeX{{\rm B\kern-.05em{\sc i\kern-.025em b}\kern-.08em
    T\kern-.1667em\lower.7ex\hbox{E}\kern-.125emX}}
\newtheorem{lemma}{Lemma}
\newtheorem{theorem}{Theorem}
\newtheorem{approximation}{Approximation}
\newcommand{\natZero}{\mathbb{N}_0}
\newcommand{\natOne}{\mathbb{N}}
\newcommand{\nodesSystem}{V} 
\newcommand{\frameSize}{m} 
\newcommand{\variableX}{x}
\newcommand{\variableT}{t}
\newcommand{\frameIndex}[1]{k(#1)} 
\newcommand{\position}[1]{\tau(#1)}
\newcommand{\nodeIndexF}{v}
\newcommand{\nodeIndexS}{u}
\newcommand{\process}{\Phi}
\newcommand{\indexN}{n}
\newcommand{\indexI}{i}
\newcommand{\indexJ}{j}
\newcommand{\selectedSlotVector}[1]{\underline{D}(#1)}
\newcommand{\selectedSlot}[2]{D^{#1}(#2)}
\newcommand{\rSelectedSlot}{d}
\newcommand{\frameState}[1]{\Lambda(#1)}
\newcommand{\AoI}[1]{\Delta(#1)}
\newcommand{\collidedFrames}[1]{C(#1)}
\newcommand{\OngoingTransmissions}[1]{B(#1)}
\newcommand{\FrameRecursive}[2]{\xi_{#1}(#2)}
\newcommand{\reservationStart}[2]{\gamma_{#1}(#2)}
\newcommand{\collidedReselections}[1]{W(#1)}
\newcommand{\selection}{s}
\newcommand{\Uniform}[1]{\mathrm{Uniform}\{#1\}}
\newcommand{\pReservationEnd}{p_\text{E}}
\newcommand{\setConnection}[1]{\mathcal{S}(#1)}
\newcommand{\threshold}{\theta}
\newcommand{\probM}[1]{\mathbb P\!\left(#1\right)}
\newcommand{\aAoI}{\bar{\Delta}}
\newcommand{\reselectingNodes}[1]{V_\text{R}(#1)}
\newcommand{\rReselectingNodes}{\bar{V}_\text{R}}
\newcommand{\rFrameState}{\lambda}
\newcommand{\emptySlots}[1]{\mathcal{N}(#1)}
\newcommand{\numEmptySlots}[1]{N(#1)}
\newcommand{\rNumEmptySlots}{n}
\newcommand{\probNumEmptySlots}[1]{P_N(#1)}
\newcommand{\eNumEmptySlots}{\mathbb{E}[N]}
\newcommand{\pRC}{p_\text{RC}}
\newcommand{\pKeep}{p_\text{Keep}}
\newcommand{\binomial}[3]{\mathrm{Bin}(#1,#2;#3)}
\newcommand{\rAoI}{\delta}
\newcommand{\eAoI}{\mathbb{E}[\bar{\Delta}]}
\newcommand{\rCollidedFrames}{c}
\newcommand{\rOngoingTransmissions}{b}
\newcommand{\maxOngoingTransmissions}{\Bar{b}}
\newcommand{\reservationLength}[2]{R^{#1}(#2)}
\newcommand{\rReservationLength}{r}
\newcommand{\setCollision}[2]{\mathcal{S}^{#1}_{#2}}
\newcommand{\rSetCollision}{\alpha}
\newcommand{\frameStateC}[2]{\Lambda_{#2}^{#1}}
\newcommand{\fRC}{\xi}
\newcommand{\ongoingTransmissionsC}[2]{B_{#2}^{#1}}
\newcommand{\permutation}{\pi}
\newcommand{\permSub}[2]{{#1}^{#2}_\pi}
\newcommand{\stateD}[1]{d^{#1}}
\newcommand{\stateC}[1]{c^{#1}}
\newcommand{\cOpp}[1]{c_{#1}}
\newcommand{\setOpp}{\mathcal{V}}
\newcommand{\rCollidedReselections}{w}
\newcommand{\maxCollidedReselections}{\bar{w}}
\newcommand{\functionQ}[1]{q(#1)}
\newcommand{\functionP}[1]{p(#1)}
\newcommand{\bigast}{\mathop{\begin{array}{@{}c@{}}%
 \text{\Huge$\ast$}\\[-0.9ex]\end{array}}}
\newcommand{\variationDistance}[1]{\eta(#1)}
\newcommand{\violation}[1]{\zeta_{#1}}
\begin{document}

\title{An Age of Information Characterization of SPS
\footnote{Copyright (c) 2026 IEEE. Personal use of this material is permitted. However, permission to use this material for any other purposes must be obtained from the IEEE by sending a request to pubs-permissions@ieee.org.}
\thanks{The authors would like to thank the Federal Ministry of Research, Technology, and Space (BMFTR) for its support of the project xG-RIC as part of the research program Communication Systems "Souverän. Digital. Vernetzt." (grant numbers 16KIS2429K and 16KIS2434).\\
Additionally, the authors acknowledge the use of AI tools to improve the readability of this paper.}}

\author{\IEEEauthorblockN{Maria Bezmenov\IEEEauthorrefmark{1}\IEEEauthorrefmark{2}, 
Matthias Frey\IEEEauthorrefmark{3},
Zoran Utkovski\IEEEauthorrefmark{4} and
Slawomir Stanczak \IEEEauthorrefmark{1}\IEEEauthorrefmark{4}}\\
\IEEEauthorblockA{\IEEEauthorrefmark{1}Technische Universität Berlin, Berlin, Germany},
\IEEEauthorblockA{\IEEEauthorrefmark{2}Robert Bosch GmbH, Hildesheim, Germany},\\
\IEEEauthorblockA{\IEEEauthorrefmark{3}University of Melbourne, Melbourne, Australia},
\IEEEauthorblockA{\IEEEauthorrefmark{4}Fraunhofer Heinrich-Hertz-Institute, Berlin, Germany}\\
Email:\{maria.bezmenov\}@campus.tu-berlin.de,
\{matthias.frey\}@unimelb.edu.au,\\
\{zoran.utkovski, slawomir.stanczak\}@hhi.fraunhofer.de}
\maketitle

\begin{abstract}
    We derive a closed-form approximation of the stationary distribution of the Age of Information (AoI) of the semi-persistent scheduling (SPS) protocol which is a core part of NR-V2X, an important standard for vehicular communications.
    While prior works have studied the average AoI under similar assumptions, in this work we provide a full statistical characterization of the AoI by deriving an approximation of its probability mass function. 
    As result, besides the average AoI, we are able to evaluate the age-violation probability, which is of particular relevance for safety-critical applications in vehicular domains, where the priority is to ensure that the AoI does not exceed a predefined threshold during system operation.     
    The study reveals complementary behavior of the age-violation probability compared to the average AoI and highlights the role of the duration of the reservation as a key parameter in the SPS protocol. We use this to demonstrate how this crucial parameter should be tuned according to the performance requirements of the application.
\end{abstract}

\begin{IEEEkeywords}
Age of Information, Semi-persistent Scheduling, C-V2X, Medium Access Control, Semantic Communication.
\end{IEEEkeywords}
\begin{acronym}
    \acro{pmf}[pmf]{probability mass function}
    \acro{cdf}[cdf]{cumulative distribution function}
    \acro{AoI}[AoI]{Age of Information}
    \acro{PAoI}[PAoI]{Peak AoI}
    \acro{3GPP}[3GPP]{3rd generation partnership project}
    \acro{C-V2X}[C-V2X]{Cellular Vehicle-to-Everything}
    \acro{V2X}{Vehicle-to-Anything}
    \acro{TB}[TB]{Transport Block}
    \acro{SCI}[SCI]{Sidelink Control Information}
    \acro{RB}[RB]{Resource Block}
    \acro{MCS}[MCS]{Modulation and Coding Scheme} 
    \acro{RRI}[RRI]{Resource Reservation Interval}
    \acro{SPS}[SPS]{Semi-Persistent Scheduling}
    \acro{RC}[RC]{SL Resource Reselection Counter}
    \acro{MAC}[MAC]{Medium Access Control}
    \acro{PDR}[PDR]{Packet Delivery Ratio}
    \acro{UE}[UE]{User Equipment}
    \acro{OWD}[OWD]{One Way Delay}
    \acro{PIR}[PIR]{Packet Inter-Reception Time}
    \acro{PLR}[PLR]{Packet Loss Ratio}
    \acro{NR-V2X}[NR-V2X]{New Radio V2X}
    \acro{LTE-V2X}[LTE-V2X]{Long Term Evolution V2X}
    \acro{V2X}[V2X]{Vehicle-to-Everything}
    \acro{ITS}[ITS]{Intelligent Transportation Systems}
    \acro{CAM}[CAM]{Cooperative Awareness Message}
    \acro{CPM}[CPM]{Collective Perception Message}
    \acro{MCM}[MCM]{Maneuver Coordination Message}
    \acro{QoS}[QoS]{Quality of Service}
    \acro{RSRP}[RSRP]{Reference Signal Received Power}
    \acro{TBS}[TBS]{Transport Block Size}
    \acro{HD}[HD]{Half Duplex}
    \acro{IRSA}[IRSA]{Irregular Repetition Slotted ALOHA}
    \acro{SIC}[SIC]{Successive Interference Cancellation}
    \acro{CSA}[CSA]{Coded Slotted ALOHA}
    \acro{mMTC}[mMTC]{Massive Machine-type Communications}
    \acro{CSMA/CA}[CSMA/CA]{Carrier-Sense Multiple Access with Collision Avoidance}
    \acro{SA}{slotted ALOHA}
    \acro{AGV}[AGV]{automated guided vehicle}
    \acro{CACC}[CACC]{Cooperative Adaptive Cruise Control}
\end{acronym}

\section{Introduction}
In the manufacturing and automotive industry, there has been a significant increase in applications that require continuous control decisions based on the perception of a remote target. This shift has introduced new demands on wireless communication systems, prioritizing not only the reliable transmission of data points, but the delivery of the right information at the right time. Semantic communication offers a conceptual framework to quantify this goal, recognizing that conventional metrics such as latency or throughput are insufficient for evaluating such advanced communication needs. New semantic metrics are thus essential \cite{Uysal_Semantic_2022,kountouris2021semantics}.

A common semantic metric to quantify the freshness of information at the receiver is the \ac{AoI}. The \ac{AoI} quantifies the time elapsed since the most recent information available at the receiver was generated at the transmitter \cite{kaul_minimizing_2011,Yates_Age_2021}. 
Optimizing the communication system to minimize the average AoI will thus lead to fresh information at the receiver, providing a basis for continuous control decisions. For some applications such as, e.g., platooning in autonomous driving or \ac{AGV} maneuvering in automated factories, particularly strict requirements on the freshness of information are necessary to ensure the intended functionality. These requirements can be expressed using an age threshold \cite{Vinel_thresholdPAoI_2015,fiems_age--information_2024,abdel-aziz_optimized_2020,thunberg_unreliable_2021}. 
The communication system should then be optimized to minimize the age-violation probability, i.e., the proportion of time spent above the age threshold. To facilitate this form of optimization, the generation time and the age threshold need to be available on all communication layers, for example through the use of a semantic header \cite{Utkovski_Semantic_2023} or, more generally, through the introduction of a semantic-effectiveness (SE)-plane \cite{popovski_semantic-effectiveness_2020}.

\subsection{Prior Work}
The concept of information freshness has been first introduced in the context of vehicular communications \cite{kaul_minimizing_2011}, and since then the AoI has been extensively studied in the literature (see, e.g., \cite{Yates_Age_2019} for an early overview). While earlier works focused mostly on point-to-point settings, more recent approaches target multiple access settings, with the objective to re-evaluate channel access mechanisms (in particular grant-free), with regard to the AoI behavior in the system.   

For some simplified system settings, the effects of basic parameters such as, e.g., channel utilization, on the AoI are studied in \cite{Yates_Status_2017,Talak_Distributed_2018, Orhan_Analysis_2021, Baiocchi_Age_2022,Baiocchi_Model_2017,Munari_Modern_2021, Bezmenov_IRSA_2024, Cao_optimize_2022, Rolich_Impact_2023}. Among the methods already investigated are classical protocols such as \ac{SA} \cite{Yates_Status_2017,Talak_Distributed_2018} and AoI-optimized versions of \ac{SA} \cite{Orhan_Analysis_2021} or \ac{CSMA/CA} \cite{Baiocchi_Age_2022,Baiocchi_Model_2017}. In addition, modern channel access methods \cite{Book_Modern_2016}, such as \ac{IRSA} \cite{Liva_Graph_2011}, have been investigated with respect to AoI \cite{Munari_Modern_2021,Bezmenov_IRSA_2024}. Especially for set-ups such as \ac{mMTC} \cite{Munari_Modern_2021} or \ac{V2X} \cite{Bezmenov_IRSA_2024}, these protocols achieve particularly low average \ac{AoI} and age-violation probabilities. 

Another important protocol is \acf{SPS}. SPS, which falls in the class of reservation-based protocols, forms the basis of the \ac{3GPP} \ac{NR-V2X} standard \cite{3gpp_TS38321} for uncoordinated transmission on Sidelink. When optimizing an already deployed protocol such as \ac{SPS}, the specifics of the protocol should be well understood so that any adaptations remain compatible with the protocol. A detailed description of the \ac{SPS} protocol, especially with regard to interactions with other communication layers, can be found in \cite{Javier_Tutorial_5G_2021}. 
An evaluation of the protocol with regard to classical metrics such as packet error rate and throughput (e.g. as a function of the channel utilization and the distance between two vehicles) is performed in \cite{Bezmenov_Semi-Persistent_2021,gonzalez-martin_analytical_2019}.  The challenges of SPS with non-periodic sampling are investigated in \cite{Bezmenov_Semi-Persistent_2021,molina-masegosa_empirical_2020}. 

In the context of SPS, several efforts were also made to characterize the average AoI or \ac{PAoI} (defined as the value that the AoI reaches immediately before a reset triggered by an update). In \cite{Cao_optimize_2022}, a model was developed that shows the dependence between the expected \ac{PAoI} and the packet generation interval. The model considers collisions with up to two nodes. In \cite{Rolich_Impact_2023}, a model was developed that characterizes the influence of the reservation duration on the AoI and the \ac{PAoI}. 

\subsection{Contribution and Outline}
\label{subsec:contribution}
Despite extensive research in this area, the AoI-related behavior of reservation-based access protocols is not yet fully understood. In particular, prior works focus on average AoI and there does not yet exist, to the best of our knowledge, a stochastic model for the age-violation probability of the SPS protocol. The age-violation probability is the probability that the AoI exceeds, at any point in time, a predetermined maximum threshold. The age-violation probability is a particularly important quantity when optimizing a wider range of automotive and other applications \cite{Vinel_thresholdPAoI_2015,fiems_age--information_2024,abdel-aziz_optimized_2020,thunberg_unreliable_2021}.

Motivated by this, we propose a novel model for the age-violation probability of the SPS protocol. 
Initial results with regard to this topic were discussed in \cite{Bezmenov_Characterization_AoI_2022}.
Specifically, we consider a setting in which devices send time-stamped status updates to all devices in communication range on a shared channel without feedback. In this setup, we make the following key contributions: 
\begin{itemize}
    \item We propose a Markov model to characterize the AoI process, prove that it has a stationary distribution, and derive a closed-form approximation for the distribution of AoI in the stationary state.
    \item Through empirical analysis and in-depth discussion, we identify conditions under which the assumptions used in our derived model, as well as those presented in \cite{Bezmenov_Characterization_AoI_2022} and \cite{Rolich_Impact_2023}, are applicable.
    \item Utilizing this stationary distribution, we analyze both the average AoI and the age-violation probability (i.e., the probability that the AoI exceeds a certain pre-defined threshold). We examine the critical influence of the reservation duration parameter and the sampling rate of perceptual data. Our findings show that in some important cases, optimizing for average AoI does not lead to an optimal age-violation probability. This shows how important it is to optimize directly for age violation probability in applications where robustness and reliability in edge cases are more important than the average performance of the system. 
\end{itemize}
\noindent\textit{Paper Outline:}
The remainder of the paper is organized as follows.  In Section~\ref{sec:System_Model}, we introduce the system model, focusing on the key features of the studied reservation-based protocol. In Section~\ref{sec:mainResults}, we present a closed-form solution for the probability distribution of the AoI and discuss numerical results for the average AoI and the age-violation probability and compare those to results derived by simulation. 
The assumptions necessary to derive the probability distribution of the AoI, along with a discussion of the specific conditions under which these assumptions are applicable, are presented in Section~\ref{sec:assumptions}. Section~\ref{sec:conclusion} summarizes the results of the paper.\\

\noindent\textit{Notation:} We use $\mathbb{N}=\{1,2,\dots\}$ to denote the set of natural numbers and $\mathbb{N}_0$ to denote $\mathbb{N}\cup \{0\}$. The indicator function is denoted as $\mathds{1}_A$ in the domain $A$. Further, we use
\begin{equation}
    \label{def:Binomial}
        \binomial{n}{p}{k}:=\binom{n}{k}p^k(1-p)^{n-k}
\end{equation}
to denote the \ac{pmf} of the binomial distribution with $n$ trials and success probability $p$.
We define
\begin{equation}
    \label{def:bigAst}
    \bigast_{i=0}^{j} \big( f_{i}(\cdot)\big)(x) := \big(f_{0} \ast f_{1} \ast \dots \ast f_{j}\big)(x)
\end{equation}
to be successive convolutions.
\section{System Model}
\label{sec:System_Model}
{We consider a network consisting of $\nodesSystem \in \natOne$ homogeneous nodes, as illustrated in Fig.~\ref{fig:all_to_all}. Each node monitors a process (e.g., the velocity of a vehicle or the temperature measured at a sensor), which is of interest to all neighboring nodes. Consequently, every node must communicate its observations to all other nodes, leading to an all-to-all communication. 
For that purpose the nodes share a communication channel, which we model as slotted in time (i.e., all time-related quantities are expressed as multiples of the duration of one slot). Assuming a \textit{frame size} $\frameSize \in \natOne$ ($\frameSize>\nodesSystem$) and that all nodes are frame and slot synchronous, we decompose the time $\variableT \in \natZero$ into a \textit{frame index} $\frameIndex{\variableT} \in \natZero$ and \textit{frame position} $\position{\variableT}\in \{0,\dots,\frameSize-1\}$ as
\begin{align}
    \frameIndex{\variableT}&:=\lfloor \variableT/\frameSize\rfloor\\
    \position{\variableT}&:=t-\frameIndex{\variableT}\frameSize.
\end{align}

\begin{figure}
    \centering
    \includegraphics[width=0.95\linewidth]{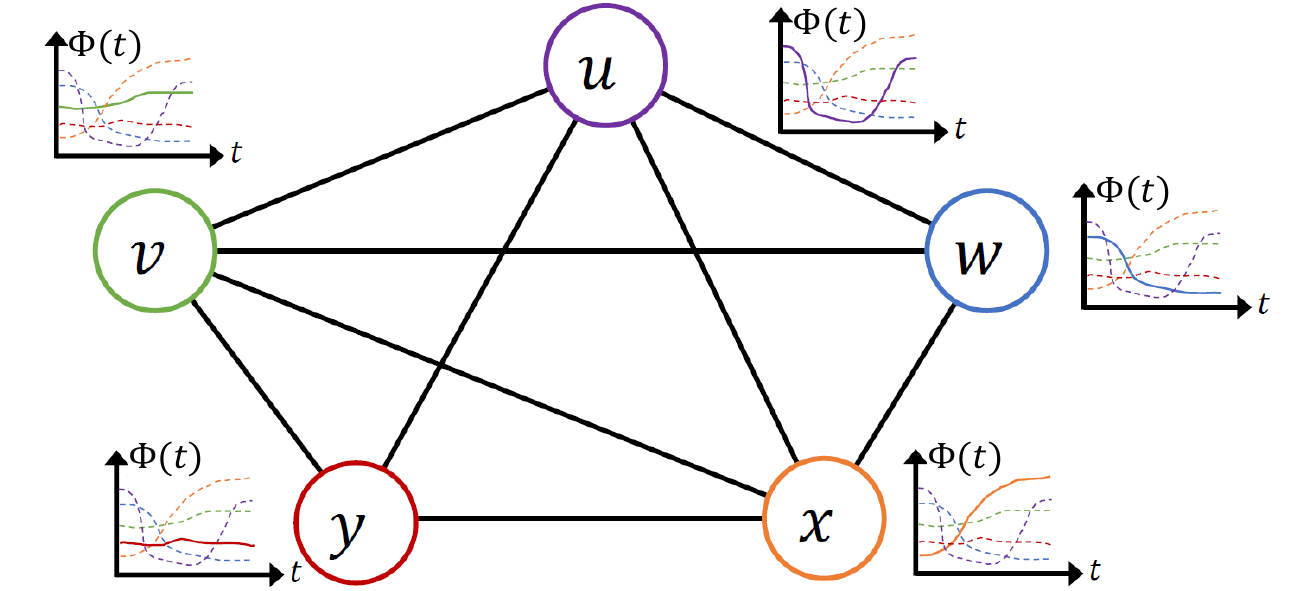}
    \caption{\textbf{All-to-all communication.} Each node monitors its own time process, which is of interest to all neighboring nodes. Consequently, every node must communicate its observations to all other nodes, leading to an all-to-all communication.}
    \label{fig:all_to_all}
\end{figure}

\begin{figure}
    \centering
    \includegraphics[width=0.95\linewidth]{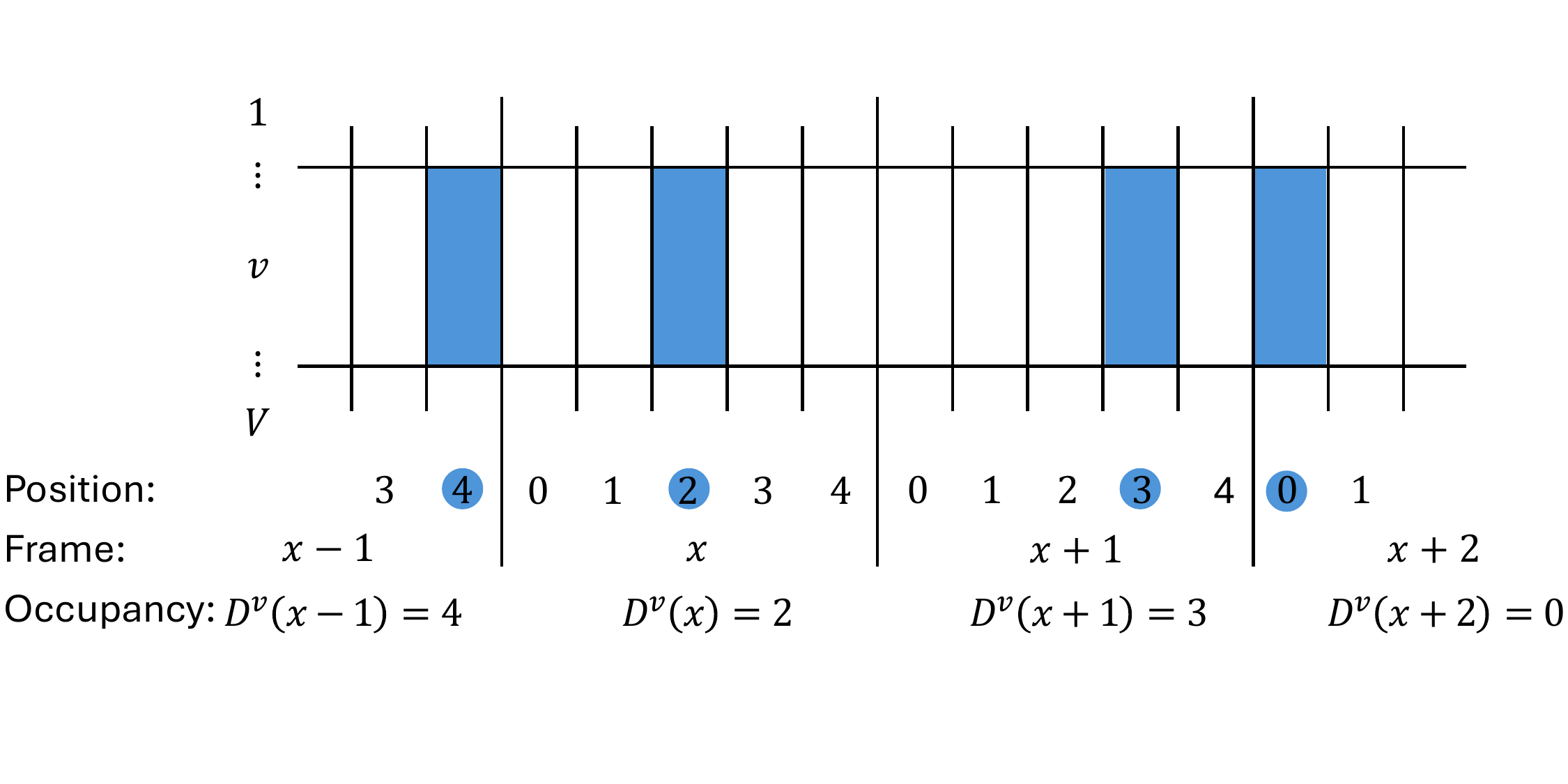}
    \caption{\textbf{Channel occupancy pattern of node $\nodeIndexF$ by frame.} A time-slotted communication channel is split into frames. Each frame consists of five slots, which are numbered and called positions. The position in which node $\nodeIndexF$ transmits in each frame $\variableX$ is marked blue and described by $\selectedSlot{\nodeIndexF}{\variableX}$.}
\label{fig:frame_structure}
\end{figure}

In the analysis we consider an arbitrary but fixed node\footnote{This assumption will simplify our analysis, but we can make it without any loss of generality since the same analysis will also hold for any other choice of $\nodeIndexF$.} $\nodeIndexF \in \{1,\dots,\nodesSystem\}$. This node $\nodeIndexF$ monitors a time-varying process $\process(\variableT)$ which represents some information that is needed by applications running on the other nodes. This process $\process(\variableT)$ is sampled at the beginning of each frame (i.e., at position zero), corresponding to periodic sampling with periodicity $\frameSize$. The content of a sample can be transmitted in exactly one slot. The position in which node $\nodeIndexF$ transmits during frame $\variableX$ is denoted as $\selectedSlot{\nodeIndexF}{\variableX} \in \{0,\dots,\frameSize-1\}$. Thus, the channel occupancy pattern (see Fig.~\ref{fig:frame_structure}) can be described as a sequence $\selectedSlotVector{\variableX} = \left({\selectedSlot{1}{\variableX},\dots, \selectedSlot{\nodesSystem}{\variableX}}\right)_{\variableX \in \natZero}$. In the following, we will refer to $\selectedSlot{\nodeIndexF}{\variableX}$ as $\selectedSlot{}{\variableX}$.

The number of simultaneous transmissions with node $\nodeIndexF$ on the shared channel in frame $\variableX$ is denoted as the \textit{frame state} $\frameState{\variableX}$ and defined as
\begin{equation}
\label{def:frameState}
  \frameState{\variableX} := \sum_{i=1}^{\nodesSystem}\mathds{1}_{\selectedSlot{i}{\variableX}=\selectedSlot{}{\variableX}}\text{.}
\end{equation}

In this system model, we consider nodes that rely on half-duplex for sending and receiving. This means that a node cannot send and receive at the same time. Further, all nodes are in communication range of each other and, as a consequence, there are no hidden nodes in the system. We discuss the implications of this assumption in Section~\ref{subsec:full_connectivity}. Considering a collision channel model, we interpret
\begin{itemize}
    \item $\frameState{\variableX} =1$ as a \emph{singleton}, i.e., the transmission of node $\nodeIndexF$ was not simultaneous with any other transmission. In this case, we assume successful decoding at every node in the network.
    \item $\frameState{\variableX} \geq 2$ as a \emph{collision}, i.e., the transmission of node $\nodeIndexF$ was simultaneous with at least one other transmission. In this case, we assume that neither transmission can be decoded correctly.
\end{itemize}

\subsection{Age of Information}
For a node $\nodeIndexS \neq \nodeIndexF \in \{1,\dots, \nodesSystem\} $, the \ac{AoI} describes the freshness of node $\nodeIndexS$'s maintained information about $\process(\variableT)$ at time $\variableT$. Formally, the \ac{AoI} $\Delta^{\nodeIndexF \rightarrow\nodeIndexS}(\variableT)$ is defined as the time between the sampling of the last singleton transmission and the time of observation $\variableT $ \cite{kaul_minimizing_2011}. Since collisions and half-duplex operation are the only error sources we model, all nodes receive the same information at the same time. Thus, without loss of generality, we can drop $\nodeIndexF \rightarrow\nodeIndexS$ and use $\AoI{\variableT}$ to denote the \ac{AoI}. In the considered system model, node $\nodeIndexF$ samples $\process$ at the beginning of each frame at position 0. Thus, we define the \ac{AoI} as 
\begin{align}
\label{def:AoI}
    \AoI{\variableT} & :=\begin{cases}
    \frameSize \cdot \collidedFrames{\frameIndex{\variableT} -1 } + \frameSize + \position{\variableT} & \position{\variableT} < \selectedSlot{}{\frameIndex{\variableT}} \\
    \frameSize \cdot \collidedFrames{\frameIndex{\variableT}} +  \position{\variableT} & \position{\variableT} \geq \selectedSlot{}{\frameIndex{\variableT}}\text{,}
    \end{cases}
\end{align} 
where the \textit{collision duration} $\collidedFrames{\variableX}$ describes the number of frames since the most recent update of node $\nodeIndexF$ in frame $\variableX$, and is defined as

\begin{equation}
\label{def:collidedFrames}
    \collidedFrames{\variableX} := \min\{\rCollidedFrames \in \{0,\dots,\variableX\}:~ \frameState{\variableX-\rCollidedFrames} = 1\}\text{.}
\end{equation}

In order to exclude the possibility of this and similar minima definitions being over an empty set, we use the convention that $\selectedSlotVector{0}=(0,\dots,0),\selectedSlotVector{1}=(1,\dots,\nodesSystem)$. We will see in Lemma~\ref{lem:stationarity} that the system can be described by a Markov chain with a unique stationary distribution. Since our later analysis is confined to the stationary state of the system, this convention does not affect the results of our analysis.

\begin{figure}
    \centering
    \includegraphics[width=0.7\linewidth]{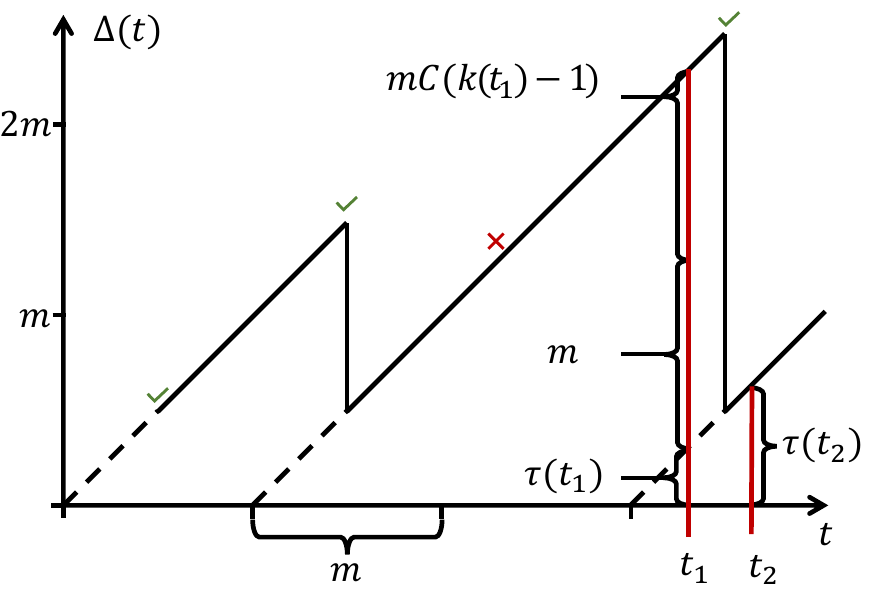}
    \caption{\textbf{Example of realization of the AoI $\AoI{\variableT}$.} The time is structured into frames of $\frameSize$ slots. At the beginning of each frame, a packet is generated (sampled from $\process(\variableT)$) and afterwards transmitted. The green checks mark signify singleton transmissions while the red crosses mark collided transmissions. The solid line shows the AoI at the receiver. In addition, the dashed line shows the channel access delay. For this example with a singleton transmission in frame $\frameIndex{\variableT_1}$, respectively $\frameIndex{\variableT_2}$, and a collided transmission in the previous frame, the collision duration $\collidedFrames{\frameIndex{\variableT_2}}$ equals zero according to \eqref{def:collidedFrames}.}
    \label{fig:AoI_def}
\end{figure}
In Fig.~\ref{fig:AoI_def}, we show an exemplary evolution of the AoI.
\subsection{Channel Access Model}
\label{subsec:transmissionRules}
In this paper, we consider decentralized access to a shared channel based on the 3GPP standard (see, e.g., 
 \cite[Section 5.22.1.1]{3gpp_TS38321}). 
The SPS protocol is designed to avoid collisions by implementing a resource reservation mechanism. All nodes reserve a position within the frame for future transmissions. By listening to the channel, other nodes can anticipate transmissions and avoid transmitting in the same position avoiding collisions. In Fig.~\ref{fig:protocol_SPS}, we illustrate an example of the protocol operation showing the transmission pattern for a system with four nodes in two subsequent frames.
 

According to the main idea of SPS, new reservations are only made on positions that were empty in the previous frame. The empty positions in frame $\variableX$ are defined as
\begin{equation}
\label{def:emptySlots}
\emptySlots{\variableX} := \{0,\dots,\frameSize-1\} \backslash \selectedSlotVector{\variableX} \text{.}
\end{equation}
Considering the example in Fig.~\ref{fig:protocol_SPS}, this would correspond to $\emptySlots{\variableX} = \{2,3\}$ and $\emptySlots{\variableX+1} = \{0,1,2\}$. From the empty positions, the reselecting node selects one position uniformly at random. To denote the number of empty slots in frame $\variableX$, we use 
\begin{equation}
\label{def:numEmptySlots}
    \numEmptySlots{\variableX}:=|\emptySlots{\variableX}|\text{.}
\end{equation}

At the beginning of each new reservation, the \textit{reservation counter} is selected from a predetermined distribution. 
Each time the node makes a transmission at the selected position, the reservation counter is decreased by one until it reaches zero, and a new reservation begins. 
For simplicity, we assume that each new reservation begins in a different position. Later, in Section~\ref{sec:application}, we discuss how to extend the model to cover new reservations in the same position, introducing the parameter \textit{sl-ProbResourceKeep} as defined in \cite{3gpp_TS38321}.

Considering a reservation counter drawn from a geometric distribution\footnote{The primary focus of our analysis is to understand the impact of reservations on the system's dynamics. For a clearer and more structured analysis, we use a geometric distribution, even though in practice a finite distribution, such as the uniform distribution, is commonly used. A comparison of the geometric and uniform distributions in Fig.~\ref{fig:expected_AoI} shows that the geometric distribution provides a good approximation of the uniform distribution in certain configurations.} with success probability $\pReservationEnd$ (henceforth called the \textit{ending probability}), the transmission rules can equivalently be summarized as follows:

At $\variableT = \frameSize\variableX $, before the first transmission of frame $\variableX$ is made, each node $\nodeIndexF$ 
\begin{itemize}
    \item with probability $(1-\pReservationEnd)$ keeps the same position \\ $\selection = \selectedSlot{}{\variableX-1}$ for the next transmission; 
    \item with probability $\pReservationEnd$ selects a new position \\$\selection \sim \Uniform{\emptySlots{\variableX-1}}$ for the next transmission. 
\end{itemize}
The transmission then takes place at $\variableT'= \frameSize\variableX  + \selection$. 
We refer to the selection of the new position as \textit{reselection}.
\begin{figure}
    \centering
    \subfloat[frame $\variableX$\label{fig:protocol_SPS_1}]{%
       \includegraphics[width=0.48\linewidth]{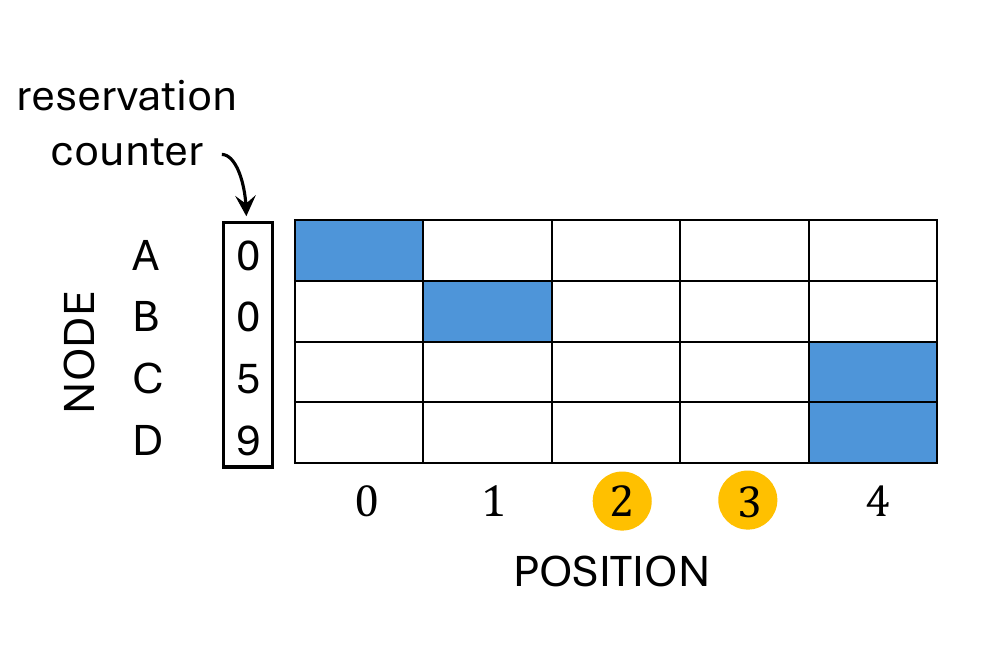}}
    \hfill
    \subfloat[frame $\variableX+1$\label{fig:protocol_SPS_2}]{%
        \includegraphics[width=0.48\linewidth]{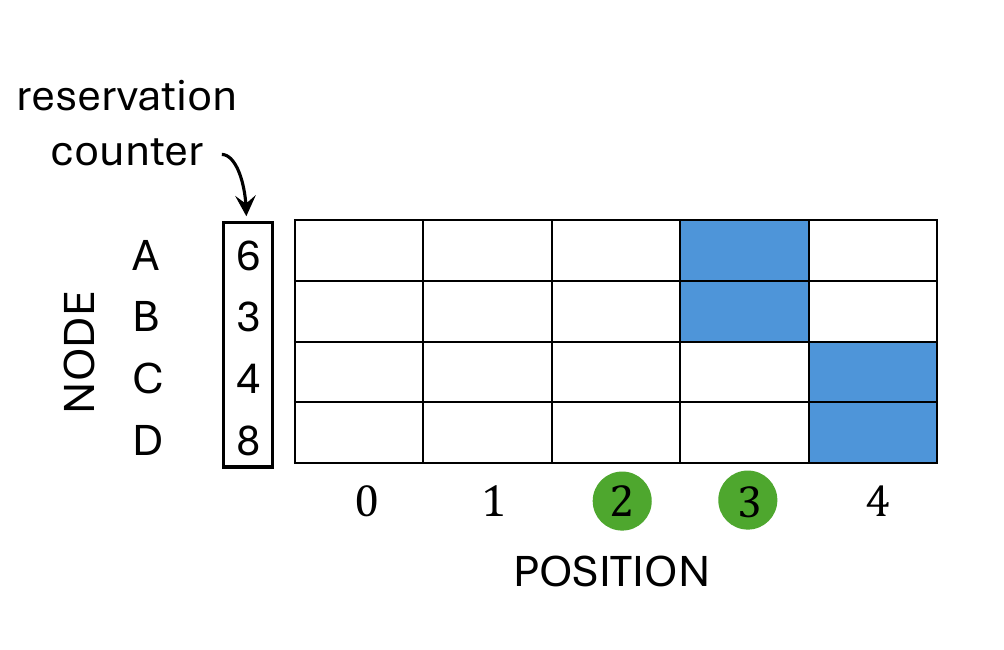}}
    \caption{\textbf{Example of operation for \ac{SPS}.} (a) shows the transmission pattern and reservation counter for nodes A, B, C and D during a generic frame $\variableX$ and (b) shows the subsequent frame $\variableX+1$. In frame $\variableX$, the reservation counters of node C and D are greater than zero, therefore they transmit in frame $\variableX + 1$ in the same position as in frame $\variableX$ and the reservation counter is decremented by one. In contrast, the reservation counters of node A and B are both zero in frame $\variableX$, therefore both nodes perform a reselection and select new positions for their transmissions in frame $\variableX+1$. The new positions are selected uniformly from the empty positions (orange) in frame $\variableX$. The number of remaining reservations is reset to a new value drawn from a uniform distribution. In the example shown, both nodes A and B select position 3 for their new transmission. Their transmissions collide, and the collision can potentially last for several frames until one of the nodes performs a reselection.}
    \label{fig:protocol_SPS}
\end{figure}


In this context we define the \textit{reservation duration} $\OngoingTransmissions{\variableX}$ to denote at frame $\variableX$ the number of frames since node $\nodeIndexF$ changed its transmission slot as
\begin{equation}
\label{def:OngoingTransmissions}
    \OngoingTransmissions{\variableX} := \min\{\rOngoingTransmissions \in \{1,\dots,\variableX\}:~ \selectedSlot{}{\variableX-\rOngoingTransmissions} \neq\selectedSlot{}{\variableX-\rOngoingTransmissions+1}\}\text{.}
\end{equation}

We use $\FrameRecursive{\indexN}{\variableX}$ to characterize the frame at which the $\indexN$-the reservation \textit{ends} prior to frame $\variableX$, where $\FrameRecursive{\indexN}{\variableX}$ is recursively defined as
\begin{equation}
    \label{def:frameRecursive}
    \FrameRecursive{\indexN+1}{\variableX}:=\FrameRecursive{\indexN}{\variableX}-\OngoingTransmissions{\FrameRecursive{\indexN}{\variableX}},
\end{equation}
with $\FrameRecursive{0}{\variableX}:=\variableX$. 
To characterize the \textit{start} of the $\indexN$-th reservation, we use $\reservationStart{\indexN}{\variableX}:=\FrameRecursive{\indexN+1}{\variableX}+1$. We use node C in Fig.~\ref{fig:protocol_SPS} as an example to explain the notation. According to the definition, in frame $\variableX+ 1$, the start and end of the zeroth reservation would be in frame $\reservationStart{0}{\variableX+1}=\FrameRecursive{0}{\variableX+1}=\variableX+1$. In addition, the end of the first reservation would be in frame $\FrameRecursive{1}{\variableX+1}=\variableX$, while the start of the first reservation is not shown in the example.

\section{Main Results}
\label{sec:mainResults}
In this chapter, we present two results. The first result, Theorem~\ref{thm:segmentation}, describes the segments into which the AoI can be decomposed. The second and main result of this paper, Approximation ~\ref{thm:pmfAoI}, is an approximation of the pmf of the AoI. In addition, we show numerical results for the cdf of the approximated AoI and discuss the quality of the approximation by comparing it to simulations of the above discussed system model. Finally, we conclude this section with numerical results for the average AoI and the age-violation probability.
\subsection{Decomposing the AoI}
The AoI $\AoI{\variableT}$ can be expressed as the sum of different segments. To separate the segments, we use the \textit{collided reservation count} $\collidedReselections{\variableX}$ to describe the number of reselections node $\nodeIndexF$ has performed up to frame $\variableX$ since its last singleton transmission as
\begin{equation}
    \label{def:collidedReselections}
    \collidedReselections{\variableX} := \min\{\indexN \in \{0,\dots,\variableX\}: \frameState{\FrameRecursive{\indexN}{\variableX}}=1\}.
\end{equation}

\begin{theorem}
\label{thm:segmentation}
The AoI for node $\nodeIndexF$ at time $\variableT$ can be written as
\begin{equation}
    \label{thm:segmantationtD}
    \AoI{\variableT}
    =
    \position{\variableT}  + \frameSize \bigg( 1+ \sum\nolimits_{\rCollidedReselections=0}^{\collidedReselections{\frameIndex{\variableT} -1 }-1}\OngoingTransmissions{\FrameRecursive{\rCollidedReselections}{\frameIndex{\variableT} -1}}\bigg)
\end{equation}
if $\position{\variableT} < \selectedSlot{}{\frameIndex{\variableT}}$ and as
\begin{equation}
    \label{thm:segmantationDt}
    \AoI{\variableT}
    =
    \position{\variableT} + \frameSize \sum\nolimits_{\rCollidedReselections=0}^{\collidedReselections{\frameIndex{\variableT} }-1}\OngoingTransmissions{\FrameRecursive{\rCollidedReselections}{\frameIndex{\variableT}}}
\end{equation}
if $\position{\variableT} \geq \selectedSlot{}{\frameIndex{\variableT}}$.
\end{theorem}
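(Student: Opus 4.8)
\medskip
\noindent\emph{Proof plan.}

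\textbf{Step 1 (reduction to a single identity).}
First I would rewrite \eqref{def:AoI} as $\AoI{\variableT}=\position{\variableT}+\frameSize\bigl(1+\collidedFrames{\frameIndex{\variableT}-1}\bigr)$ when $\position{\variableT}<\selectedSlot{}{\frameIndex{\variableT}}$ and as $\AoI{\variableT}=\position{\variableT}+\frameSize\,\collidedFrames{\frameIndex{\variableT}}$ otherwise. Comparing this with \eqref{thm:segmantationtD}--\eqref{thm:segmantationDt} (the ``$+1$'' in \eqref{thm:segmantationtD} matching the extra $\frameSize$ in the first branch of \eqref{def:AoI}), in both cases the theorem reduces to the single identity
\[
  \collidedFrames{\variableX}=\sum\nolimits_{\rCollidedReselections=0}^{\collidedReselections{\variableX}-1}\OngoingTransmissions{\FrameRecursive{\rCollidedReselections}{\variableX}}
\]
for every frame $\variableX$. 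By \eqref{def:frameRecursive} each summand equals $\FrameRecursive{\rCollidedReselections}{\variableX}-\FrameRecursive{\rCollidedReselections+1}{\variableX}$, so the right-hand side telescopes to $\FrameRecursive{0}{\variableX}-\FrameRecursive{\collidedReselections{\variableX}}{\variableX}=\variableX-\FrameRecursive{\collidedReselections{\variableX}}{\variableX}$. Thus it is enough to prove $\FrameRecursive{\collidedReselections{\variableX}}{\variableX}=\variableX-\collidedFrames{\variableX}$; in words, \emph{the most recent singleton transmission of node $\nodeIndexF$ in or before frame $\variableX$ (cf.\ \eqref{def:collidedFrames}) occurs exactly at the end of one of the reservation periods visited by the recursion \eqref{def:frameRecursive}.}

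\textbf{Step 2 (the key structural fact: the frame state is non-increasing within a reservation period).}
Let $[\variableY_1,\variableY_2]$ be a maximal block of consecutive frames in which node $\nodeIndexF$ transmits in one fixed slot $\selection$; by \eqref{def:OngoingTransmissions}--\eqref{def:frameRecursive} these blocks are exactly the intervals $[\reservationStart{\indexN}{\variableX},\FrameRecursive{\indexN}{\variableX}]$, and consecutive blocks are adjacent since $\reservationStart{\indexN}{\variableX}=\FrameRecursive{\indexN+1}{\variableX}+1$. I claim $\frameState{\variableY}\leq\frameState{\variableY-1}$ for $\variableY_1<\variableY\leq\variableY_2$. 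Indeed, any node $\nodeIndexS$ with $\selectedSlot{\nodeIndexS}{\variableY}=\selection$ either kept its reservation from frame $\variableY-1$, whence $\selectedSlot{\nodeIndexS}{\variableY-1}=\selection$ too, or it reselected at frame $\variableY$ and hence drew its slot from $\emptySlots{\variableY-1}$; the second possibility is excluded because $\selection\notin\emptySlots{\variableY-1}$, node $\nodeIndexF$ already occupying $\selection$ in frame $\variableY-1\in[\variableY_1,\variableY_2]$. Therefore $\{\nodeIndexS:\selectedSlot{\nodeIndexS}{\variableY}=\selection\}\subseteq\{\nodeIndexS:\selectedSlot{\nodeIndexS}{\variableY-1}=\selection\}$, giving the claimed monotonicity. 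Since also $\frameState{\cdot}\geq1$ on the whole block (node $\nodeIndexF$ is always counted), on each block the state is either $\geq2$ at every frame, or it drops to $1$ and then stays at $1$ until $\variableY_2$. In particular, $\frameState{\variableY_2}=1$ if and only if some frame of the block is a singleton; contrapositively, if $\frameState{\variableY_2}\neq1$ then $\frameState{\variableY}\neq1$ for \emph{all} $\variableY\in[\variableY_1,\variableY_2]$. I expect this monotonicity to be the main obstacle; everything else is bookkeeping.

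\textbf{Step 3 (conclusion).}
Put $\indexN^{*}:=\collidedReselections{\variableX}$. The frames $\FrameRecursive{0}{\variableX},\dots,\FrameRecursive{\indexN^{*}-1}{\variableX}$ are the right endpoints of the $\indexN^{*}$ most recent reservation blocks (block $\indexN$ being $[\FrameRecursive{\indexN+1}{\variableX}+1,\FrameRecursive{\indexN}{\variableX}]$), and by the adjacency noted above these blocks together cover the interval $[\FrameRecursive{\indexN^{*}}{\variableX}+1,\variableX]$. By \eqref{def:collidedReselections}, $\frameState{\FrameRecursive{\indexN^{*}}{\variableX}}=1$ while $\frameState{\FrameRecursive{\indexN}{\variableX}}\neq1$ for every $\indexN<\indexN^{*}$; applying Step 2 to each of those $\indexN^{*}$ blocks yields $\frameState{\variableY}\neq1$ for all $\variableY\in[\FrameRecursive{\indexN^{*}}{\variableX}+1,\variableX]$. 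Hence the latest singleton of node $\nodeIndexF$ at or before frame $\variableX$ is precisely $\FrameRecursive{\indexN^{*}}{\variableX}$, i.e.\ $\variableX-\collidedFrames{\variableX}=\FrameRecursive{\collidedReselections{\variableX}}{\variableX}$, which is the identity of Step 1 and proves the theorem. (For frames close to the origin this additionally uses the convention $\selectedSlotVector{0}=(0,\dots,0)$, $\selectedSlotVector{1}=(1,\dots,\nodesSystem)$, which makes frame $1$ a singleton for node $\nodeIndexF$, so that $\variableX-\collidedFrames{\variableX}\geq1$ and all the minima above are over nonempty sets; in any case only the stationary regime enters the later analysis.)
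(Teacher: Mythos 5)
Your proposal is correct and follows essentially the same route as the paper: it reduces the theorem to the identity $\collidedFrames{\variableX}=\sum_{\rCollidedReselections=0}^{\collidedReselections{\variableX}-1}\OngoingTransmissions{\FrameRecursive{\rCollidedReselections}{\variableX}}$ (the paper's Lemma~\ref{lem:collidedFrames}), telescopes via \eqref{def:frameRecursive}, and then verifies $\variableX-\FrameRecursive{\collidedReselections{\variableX}}{\variableX}=\collidedFrames{\variableX}$ using the same key structural fact, namely that the frame state is non-increasing within a reservation block because reselecting nodes only choose from slots that were empty in the previous frame. Your Step~2 monotonicity argument and Step~3 block-covering conclusion mirror the paper's proof of \eqref{eq:statement1} and \eqref{eq:statement2}, so no gap remains.
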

For a better understanding of the result, the theorem is illustrated with an example in Fig.~\ref{fig:thm_seperation}. The proof of Theorem~\ref{thm:segmentation} can be found in Appendix~\ref{sec:proofsSegmentation}.

\begin{figure*}
    \centering
    \subfloat[Theorem~\ref{thm:segmentation} -- equation \eqref{thm:segmantationtD}: The transmission in frame $\frameIndex{\variableT}$ has occurred before $\variableT$ \label{fig:thm_sepAoI_Dt} ]{%
       \includegraphics[width=0.7\linewidth]{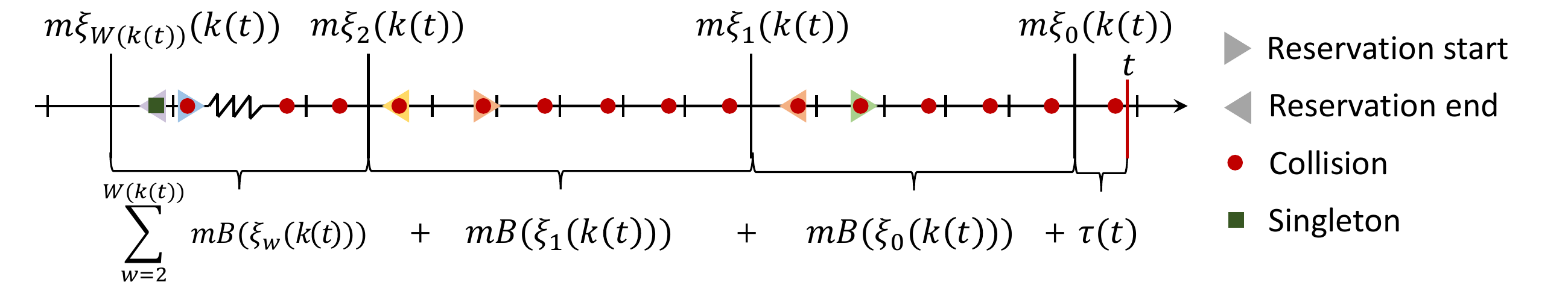}}\\[+2ex]
    \subfloat[Theorem~\ref{thm:segmentation} -- equation \eqref{thm:segmantationDt}: The transmission in frame $\frameIndex{\variableT}$ has occurred after $\variableT$  \label{fig:thm_sepAoI_tD}]{%
        \includegraphics[width=0.7\linewidth]{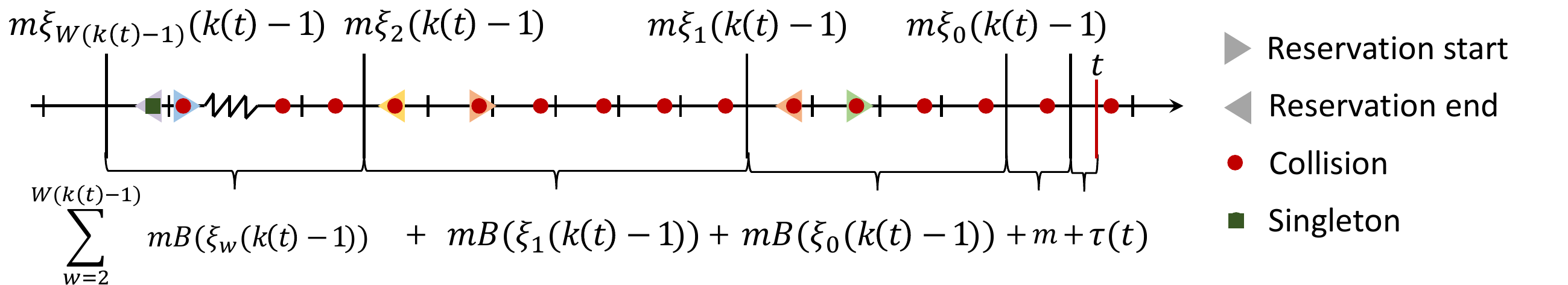}}        
    \caption{\textbf{Segments contributing to the AoI.} The transmission pattern of node $\nodeIndexF$ is illustrated frame by frame. The transmissions are shown here as circles. Green squares signify singleton transmissions and red circles collided transmissions. In addition, we mark the start and end of a reservation using inward-pointing triangles. Based on \eqref{def:frameRecursive} and \eqref{def:collidedReselections}, we label the final transmission of the last reservations before $\frameIndex{\variableT}$ (a) and, $\frameIndex{\variableT}-1$ (b), respectively. Thus, the distance between two reservations, for example, the first and second reservation before frame $\frameIndex{\variableT}$, is $\frameSize\OngoingTransmissions{\FrameRecursive{1}{\frameIndex{\variableT}}}$. According to definition~\eqref{def:AoI}, the AoI is the time between the sampling instance of the last singleton and the time of observation $\variableT$. In this example the last singleton is in frame $\FrameRecursive{\collidedReselections{\frameIndex{\variableT}}}{\frameIndex{\variableT}}$(a) respectively $\FrameRecursive{\collidedReselections{\frameIndex{\variableT}-1}}{\frameIndex{\variableT}-1}$(b) and the sampling occurred at the beginning of this frame. The last singleton transmission before $\variableT$ is always the last transmission within a reservation. Thus, the AoI is characterized by the sum of $\rCollidedReselections$ reservation durations (in slots) plus the slots already passed in frame $\frameIndex{\variableT}$.} 
    \label{fig:thm_seperation}
\end{figure*}
\subsection{Approximation results for the AoI}
\label{subsec:pmfAoI}
The core result of this paper is an approximation that describes the pmf of the AoI in stationary state (see Section~\ref{subsec:stationarity} especially Lemma~\ref{lem:stationarity}). The approximation is based on the assumption discussed in Section~\ref{subsec:AssIndependence} implying that the dependence of the state at the start of a new reservation on the state of the previous reservation is negligible. A more detailed discussion can be found in Section~\ref{subsec:AssIndependence}. 
\begin{approximation}
\label{thm:pmfAoI}
For a stationary system, the probability distribution of the AoI $\probM{\AoI{\variableT}= \rAoI}$ can be approximated as 
    \begin{align}
    \label{eq:thmPmfAoI1}
        \probM{\AoI{\variableT} = \rAoI}  \approx 
        \frac{\position{\variableT}}{\frameSize} \functionQ{\frameIndex{\rAoI}-1} 
            + \frac{\frameSize-\position{\variableT}}{\frameSize} \functionQ{\frameIndex{\rAoI}},
    \end{align}
if $\position{\variableT}=\position{\rAoI} $ and as 
    \begin{align}
    \label{eq:thmPmfAoI2}
        \probM{\AoI{\variableT} = \rAoI}  = 0
    \end{align}
otherwise, with
    \begin{align}
    \label{prob:collidedFramesApprox}
        \functionQ{\rCollidedFrames}:= 
        \sum_{\rCollidedReselections=0}^{\maxCollidedReselections} \left(
            \sum_{\rOngoingTransmissions=1}^{\maxOngoingTransmissions} 
            \bigg( \pReservationEnd(1-\pReservationEnd)^\rOngoingTransmissions 
         - \functionP{\rOngoingTransmissions} \bigg)\right)\cdot
            \bigast_{\indexN=0}^{\rCollidedReselections-1}
            \bigg(\functionP{\cdot}\bigg)(\rCollidedFrames) 
    \end{align}
and 
    \begin{align}
        \functionP{\variableX} &:= \pReservationEnd(1-\pReservationEnd)^\variableX 
        \cdot \Bigg(1- \sum_{\rFrameState=1}^{\nodesSystem-1} \bigg[\left(1-\left(1-\pReservationEnd\right)^{\variableX-1}\right)^{\rFrameState}\nonumber\\
        &\cdot
        \sum_{\rNumEmptySlots=1}^{\frameSize} 
            \probNumEmptySlots{\rNumEmptySlots} \cdot
            \binomial{\nodesSystem-1}{\frac{\pReservationEnd}{\rNumEmptySlots}}{\rFrameState}
        \bigg]\Bigg),
    \end{align} 
\end{approximation}
for $\variableX \in\{1,\maxOngoingTransmissions\}$ and with 
\begin{equation}
\label{def:emptySlotsStatinarity}
    \probNumEmptySlots{\rNumEmptySlots}:= \probM{\numEmptySlots{\variableX}= \rNumEmptySlots}
\end{equation}
denoting the stationary probability distribution of empty slots $\numEmptySlots{\variableX}$ in frame $\variableX$. A detailed explanation on the existence of a stationary distribution $\probNumEmptySlots{\rNumEmptySlots}$ can be found in Section~\ref{subsec:stationarity}.
The parameter $\maxCollidedReselections$ describes the maximum number of collided reservations that we take into account within the model. The parameter $\maxOngoingTransmissions$ describes the maximum number of transmissions per reservation that we take into account within the model. The accuracy of the approximation increases as $\maxCollidedReselections$ and $\maxOngoingTransmissions$ increase. A detailed derivation of Approximation~\ref{thm:pmfAoI} can be found in Appendix~\ref{sec:derivation_pmf}.
Clearly, the pmf approximation of the AoI given in Approximation~\ref{thm:pmfAoI} depends on the time of observation $\variableT$. However, it is worth noting that this dependence concerns only the position $\position{\variableT}$ and not the frame index $\frameIndex{\variableT}$. For a compact representation of the different distributions for the different positions, we average the distribution of the AoI over all possible positions and represent them as 
\begin{equation}
    \probM{\aAoI=\rAoI} := \sum_{n=0}^{m-1} \left( \frac{1}{\frameSize}\probM{\AoI{\frameIndex{t}\frameSize+\indexN}=\rAoI}\right).
\end{equation}
\subsection{Numerical results}
\label{subsec:numericalResults}
In order to validate the accuracy of the pmf we have implemented the system described above in Python for comparison. We show in particular that the approximation reproduces the pmf well despite the assumption of independent states. The source code used for the simulations is available as an electronic supplement with this paper. Each individual simulation consists of 550,000 frames, whereby the first 50,000 frames are not evaluated in order to avoid transient effects. 
For the numerical evaluation of Approximation~\ref{thm:pmfAoI}, we consider reservation durations up to a length of $\maxOngoingTransmissions=1000$ and a maximum collision number of $\maxCollidedReselections=50$. While higher values of $\maxCollidedReselections$ and $\maxOngoingTransmissions$ lead to a more precise approximation, they come at the trade-off of longer computational time. Testing different combinations showed sufficiently good results for $\maxOngoingTransmissions=1000$ and $\maxCollidedReselections=50$, as depicted in Fig.~\ref{fig:pmf_AoI_V195_m200}.
As describing $\probNumEmptySlots{\rNumEmptySlots}$ is difficult, we assume, as in \cite{Rolich_Impact_2023}, that the number of empty slots in each frame is approximately equal to the expected value of empty slots $\eNumEmptySlots$. We discuss this assumption and the derivation of $\eNumEmptySlots$ in more detail in Section~\ref{subSec:emptySlots}.

Based on this assumption, $\functionP{\rOngoingTransmissions}$ can be approximated as
    \begin{align}
        \functionP{\rOngoingTransmissions} &\approx \pReservationEnd(1-\pReservationEnd)^\rOngoingTransmissions 
        \cdot \Bigg(1- \sum_{\rFrameState=1}^{\nodesSystem-1} \bigg[\left(1-\left(1-\pReservationEnd\right)^{\rOngoingTransmissions-1}\right)^{\rFrameState}\nonumber\\
        &\cdot
            \binomial{\nodesSystem-1}{\frac{\pReservationEnd}{\eNumEmptySlots}}{\rFrameState}
        \bigg]\Bigg).
    \end{align}

In Fig.~\ref{fig:pmf_AoI_V195_m200}, we show $\probM{\aAoI\leq\rAoI}$ for different combinations of frame size (sampling period), channel load, and the ending probability $\pReservationEnd$. For all combinations of parameters shown, the proposed approximation matches the simulation results closely.  

Let us now take a closer look at the evolution of the cdf of the AoI $\probM{\aAoI\leq\rAoI}$. For all combinations, it is noticeable that the distribution has a strong growth before $2\frameSize$ and a slow growth with a long tail afterwards. The higher the AoI, the longer there has been no successful transmission. As in \cite{Rolich_Impact_2023}, we refer to time periods without successful singleton transmissions as \textit{OFF-times}. The length of the tail represents the duration of OFF-times, and the slope represents the rate at which OFF-times are observed. 

Furthermore, we observe that smaller $\pReservationEnd$ and therefore longer reservation counters (compare green and blue curves) lead to significantly higher probabilities below $2\frameSize$ and a longer tail. This means that there are fewer prolonged OFF-times, but those that remain can be even longer. When we increase the channel load $\frac{\nodesSystem}{\frameSize}$ (compare the orange and blue curve), we observe that the number of OFF-times is becoming more frequent and longer. Reducing the slots $\frameSize$ per frame by half (compare the red and blue curves) leads to a shift; the values adopted by the AoI are only half as large.

\begin{figure}
    \centering
    \includegraphics[width=0.95\linewidth]{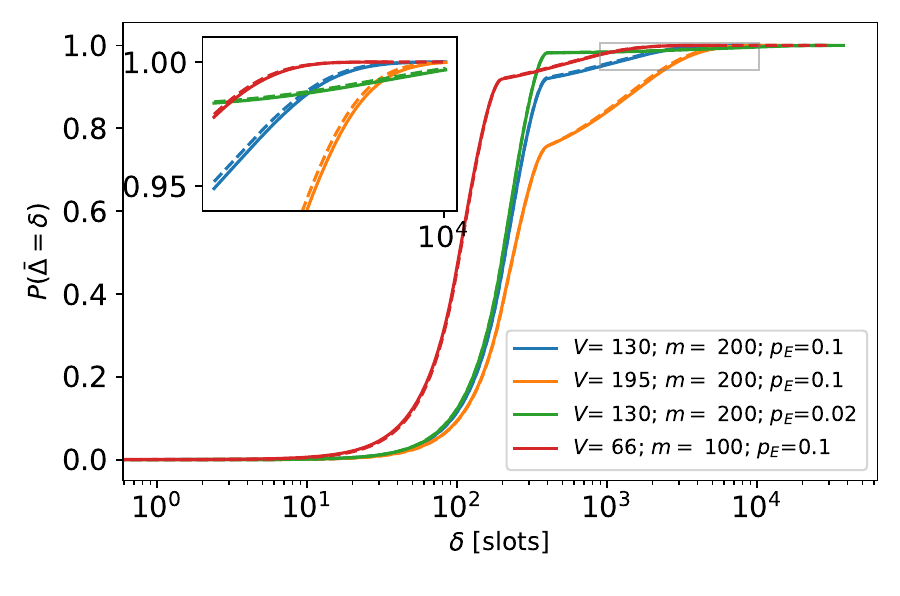}
    \caption{Cdf of the AoI comparing analytic (solid line) and simulative (dashed line) results.} 
    \label{fig:pmf_AoI_V195_m200}
\end{figure}

\subsection{Application of the AoI distribution}
\label{sec:application}
These correlations can now be used to adapt the parameters to the requirements of the respective application.
Consider, for example, the role of the reservation counter. In the \ac{3GPP} standard, the distribution from which the reservation counter is chosen is given as a function of the sampling rate (e.g. uniform in $[5,15]$ for a sampling rate of 100ms). 
In our model, we assume a geometric distribution of the reservation counter. We use the pmf from Approximation~\ref{thm:pmfAoI} to calculate the average AoI and the age-violation probability and thus determine the optimal working conditions. 

The average AoI can be computed from the pmf as 
\begin{equation}
    \eAoI = \sum_{\delta=0}^{\infty}\delta\cdot\probM{\aAoI=\rAoI} 
\end{equation}
and the age-violation probability as
\begin{equation}
    \violation{\threshold} = 1-\sum_{\delta=0}^{\threshold}\probM{\aAoI=\rAoI}\text{.}
\end{equation}

\subsubsection*{Average AoI} For the average AoI of SPS, there is already a model \cite{Rolich_Impact_2023} against which we compare our results.
The assumptions of the system model in \cite{Rolich_Impact_2023} are largely identical to ours. $\nodesSystem$ nodes send samples based on the rules presented in Section~\ref{sec:System_Model}.
The first main difference is the introduction of probability $\pKeep$, which specifies the probability with which the same slot $\selectedSlot{}{\variableX}$ is used for the duration of a further reservation. If the reservation counter is drawn from a geometric distribution with probability $\pRC$, the parameter $\pReservationEnd$ in this paper is related to the parameters $\pRC$ and $\pKeep$ in \cite{Rolich_Impact_2023} as follows:
\begin{equation}
\label{eq:pReservationEnd}
    \pReservationEnd = (1-\pRC)(1-\pKeep)\text{.}
\end{equation}
The second difference is the sample generation. In both \cite{Rolich_Impact_2023} and our model, a new sample is assumed for each frame. In our model, we additionally consider the time of sampling. We assume that samples are always generated at a fixed time in the frame (e.g. in the 0-th slot) and thus model a periodic upper-layer application. 

Fig.~\ref{fig:expected_AoI} shows the average AoI as a function of $\pReservationEnd$ for 195 nodes and 200 slots per frame. As a baseline, we used the analytic results of \cite{Rolich_Impact_2023} and simulation results. 
We include simulation results for two cases. In the first case, we assume that the reservation counter is drawn uniformly at random from $[5,15]$ according to \cite{3gpp_TS36213}. In the second case, we assume a geometric distribution with $\pRC=0.9$. In both cases $\pKeep$ is varied from 0 to 0.8, resulting, according to \eqref{eq:pReservationEnd}, in $\pReservationEnd$ varying from 0.02 to 0.1

First, we observe that the curves of our model are very similar to those of \cite{Rolich_Impact_2023}. The difference between the curves is caused by the slightly different system models, specifically our additional consideration of the sampling instance. This shows that both models can reproduce the average AoI well. As in \cite{Rolich_Impact_2023}, we also see here that in the range of small values $\pReservationEnd$, a uniform distribution of the duration of the reservation can reasonably be approximated by a geometric distribution.
\begin{figure}
    \centering
    \includegraphics[width=0.9\linewidth]{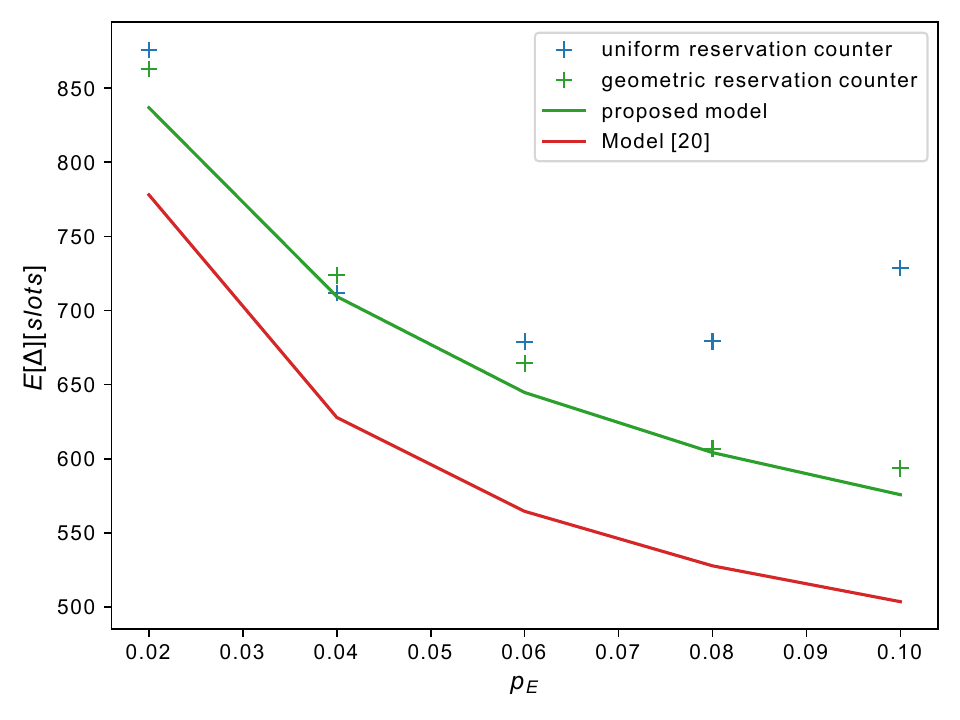}
    \caption{The average AoI as a function of $\pReservationEnd$ for a system with 195 nodes and 200 slots per frame. Analytical models versus simulation (cross markers)}
    \label{fig:expected_AoI}
\end{figure}

\begin{figure}
    \centering
    \includegraphics[width=0.9\linewidth]{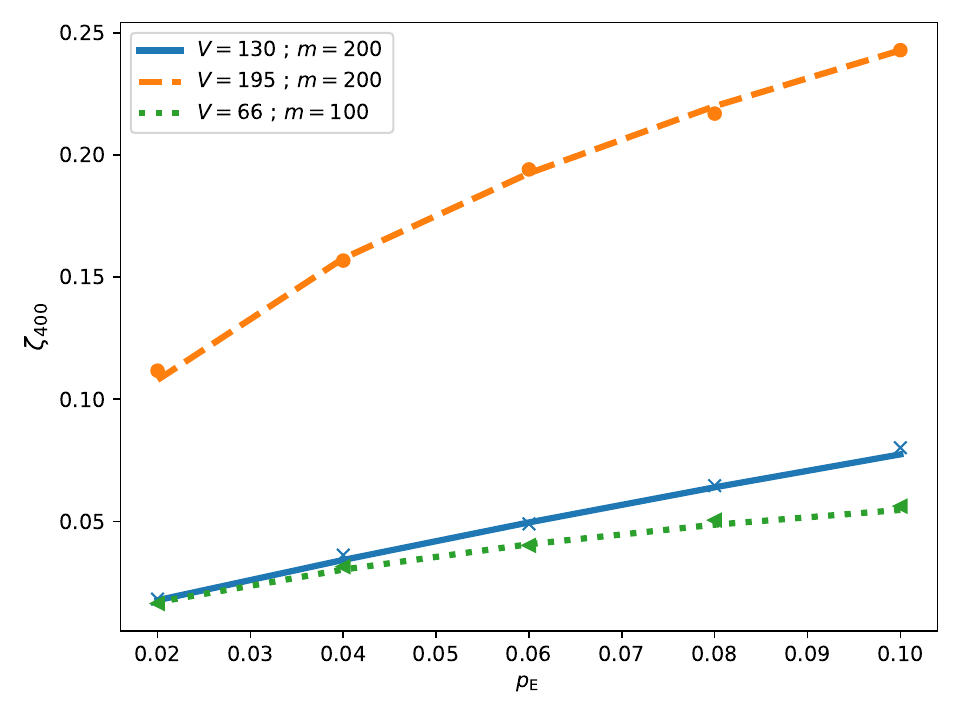}
    \caption{The age violation probability for a threshold $\threshold = 400$ slots as a function of $\pReservationEnd$ and for frame durations of 100 and 200 slots and channel loads of 0.66 and 0.98 nodes/slot, comparing the analytical model (lines) against simulations (markers).}
    \label{fig:outage_AoI}
\end{figure}
\subsubsection*{Age-violation probability} In Fig.~\ref{fig:outage_AoI}, the age-violation probability is illustrated as a function of $\pReservationEnd$ for frame durations of 100 and 200 slots and system sizes of 66, 130 and 195 nodes. For the evaluation, we assume an exemplary application threshold of $\threshold=400$ slots, however, other thresholds in the range of 200 to 2000 slots show the same tendencies. For all parameter combinations, we observe that the age-violation probability increases with higher $\pReservationEnd$ and therefore shorter reservation length. Furthermore, we observe, as expected, a lower age-violation probability for equal load but smaller frame size. This is not surprising: E.g., with a sampling period of 100 slots, two transmissions could be lost without exceeding the AoI threshold. With a sampling rate of 200 slots on the other hand, the threshold is exceeded as soon as a single transmission is lost. 
Due to the increased reservation duration, consecutive collisions are longer lasting but occur less frequently. This means that the case of two collisions that can be compensated for with shorter sampling rates basically no longer occurs. 
Thus, instead of sending twice as many samples per node, we could accommodate almost twice as many nodes at half the sampling rate without increasing the age-violation probability. 

Further, we observe that optimizing for age-violation and average AoI can lead to complementary operational outcomes. Specifically, as system parameters such as $\pReservationEnd$ increase, the age-violation probability tends to worsen even as the average AoI improves. This underlines the necessity for models that capture the full distribution of the \ac{AoI} and associated violation probabilities, beyond mere averages.

Averages often mask the impact of extreme AoI events that can severely degrade performance \cite{abdel-aziz_optimized_2020}. This is particularly significant in the V2X context, where extreme AoI values lead to substantial tracking errors \cite{thunberg_unreliable_2021}, reducing the monitoring capabilities of maneuvering vehicles. This diminishes their ability to respond promptly to unexpected events, thus compromising the safety of driving applications.

These issues are particularly crucial in specific applications like cooperative coordination \cite{thunberg_unreliable_2021}, \ac{CACC} \cite{Vinel_thresholdPAoI_2015}, and platooning \cite{Vinel_thresholdPAoI_2015}. For instance, in cooperative coordination, a failure to promptly communicate abort messages may prevent vehicles from adapting to new road conditions, which can lead to potential collisions at intersections \cite{thunberg_unreliable_2021}.

\section{Assumptions}
\label{sec:assumptions}
In this section, we take a closer look at the assumptions made for the results and discuss the implications. 
Before we discuss the assumptions in detail, we summarize them here: 
(i) The nodes in the system are fully connected. (ii) The system is in steady state. 
(iii) The state $\frameState{\reservationStart{\indexI}{\variableX}}$ at the beginning of a reselection is approximately independent of the state $\frameState{\FrameRecursive{\indexI+1}{\variableX}}$ at the end of the last reselection. 
(iv) For the numerical evaluation, we have also assumed that the number of empty slots in each slot is constant. 

\subsection{Full connectivity}
\label{subsec:full_connectivity}
In Section~\ref{sec:System_Model}, we assume that the system has full connectivity, with collisions being the only source of transmission errors. This assumption is supported by findings from system-level simulations of LTE-V2X channel access, as described in \cite[Fig. 15]{molina-masegosa_comparison_2020}. These simulations demonstrate that within the direct vicinity ($\leq$ 500 m) of a node, collisions are the main cause of packet loss in cellular V2X transmissions while propagation errors play a far smaller role. Hence, for this regime it is reasonable to expect that an analysis of performance metrics such as AoI that is focused on packet loss can still yield a useful approximation of overall communication performance. Examples for communication scenarios in the automotive context in which all nodes are in direct vicinity of each other include cooperative coordination\cite{thunberg_unreliable_2021}, platooning \cite{Vinel_thresholdPAoI_2015}, \ac{CACC}\cite{Vinel_thresholdPAoI_2015} and group start \cite{Xu_V2X_2020}. In all use cases, a group of autonomous or semi-autonomous vehicles coordinates their movement as a group to improve safety and road efficiency by mitigating the effects of unexpected driving actions and maximizing the use of available road space. A detailed description of all use cases can be found in \cite{5gaa_automotive_association_study_2021,5gaa_automotive_association_c-v2x_2021}.

Assuming full connectivity also disregards hidden nodes as an additional source of error. Findings in \cite[Fig. 3-4;6-7]{Rolich_Impact_2023} show that neglecting hidden nodes can sometimes lead to an underestimation of transmission errors and of the average \ac{PAoI}. In the example simulated in \cite{Rolich_Impact_2023}, this leads to a small to moderate error in the estimation of average \ac{PAoI} in the regime of short reservation lengths. As the reservation length increases, this error vanishes, and is negligible even for moderately long reservation lengths. We therefore expect that the approximations reported in this paper are adequate in the regime of medium to long reservation lengths even in the presence of hidden nodes. In future research, the model could be extended to include hidden nodes so that the approximation accuracy can be increased for the regime of very short reservation lengths in the presence of hidden nodes.
%

\subsection{Stationarity}
\label{subsec:stationarity}
The system described in Section~\ref{sec:System_Model} is fully described by the channel occupancy pattern $\selectedSlotVector{\variableX} = \left({\selectedSlot{1}{\variableX},\dots, \selectedSlot{\nodesSystem}{\variableX}}\right)_{\variableX \in \natZero}$. 
Based on the transmission rules in Section~\ref{subsec:transmissionRules}, the transition probability of the channel occupancy pattern from one frame to the next can be stated as
\begin{align}
   \label{def:probSelectedSlot1}
   &\probM{\selectedSlotVector{\variableX+1} = \left(\stateD{1},\dots,\stateD{\nodesSystem}\right)|\selectedSlotVector{\variableX} = \left(\stateC{1},\dots,\stateC{\nodesSystem}\right)}\nonumber\\
   &=\left( 1 - \pReservationEnd \right)^{
       \sum_{\nodeIndexF=1}^{\nodesSystem}\mathds{1}_{\stateD{\nodeIndexF}=\stateC{\nodeIndexF}}
       }
       \left( \frac{\pReservationEnd}{\numEmptySlots{\stateC{1},\dots,\stateC{\nodesSystem}}} \right)^{
       \sum_{\nodeIndexF=1}^{\nodesSystem}\mathds{1}_{\stateD{\nodeIndexF}\neq\stateC{\nodeIndexF}}
       } 
\end{align}  
for all $(\stateC{1},\dots,\stateC{\nodesSystem},\stateD{1},\dots,\stateD{\nodesSystem}) \in \setConnection{\variableX+1}$, 
where 
\begin{multline*}
\setConnection{\variableX+1} := \{ (\stateC{1},\dots,\stateC{\nodesSystem},\stateD{1},\dots,\stateD{\nodesSystem}) \in \{0, \dots, \frameSize-1\}^{2\nodesSystem}:\\ \exists \nodeIndexF,\nodeIndexS \in \{1,\dots,\nodesSystem\} : \stateD{\nodeIndexF}\neq\stateC{\nodeIndexF} \wedge \stateD{\nodeIndexF}=\stateD{\nodeIndexS} \wedge  \nodeIndexS \neq \nodeIndexF \}
\end{multline*}
and as 
\begin{align}
   \label{def:probSelectedSlot2}
   \probM{\selectedSlotVector{\variableX+1} = \left(\stateD{1},\dots,\stateD{\nodesSystem}\right)|\selectedSlotVector{\variableX} = \left(\stateC{1},\dots,\stateC{\nodesSystem}\right)}
   =0
   \end{align}
for all $(\stateC{1},\dots,\stateC{\nodesSystem},\stateD{1},\dots,\stateD{\nodesSystem}) \notin \setConnection{\variableX+1}$.
Therefore, the system is a Markov Chain.
\begin{lemma} 
\label{lem:stationarity}
    The sequence
    $\selectedSlotVector{\variableX} = \left({\selectedSlot{1}{\variableX},\dots, \selectedSlot{\nodesSystem}{\variableX}}\right)_{\variableX \in \natZero}$  is an aperioidc, irreducible Markov chain with state space $\{1,\dots,\frameSize\}^\nodesSystem$\text{.}
    Therefore, the Markov chain has a unique stationary distribution for $\selectedSlotVector{\variableX}$.
\end{lemma}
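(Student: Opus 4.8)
The plan is to verify the three standard properties---irreducibility, aperiodicity, and finiteness of the state space---and then invoke the classical ergodic theorem for finite Markov chains, which guarantees existence of a unique stationary distribution. Finiteness is immediate: the state space $\{1,\dots,\frameSize\}^{\nodesSystem}$ has exactly $\frameSize^{\nodesSystem}$ elements. (One should note the cosmetic mismatch between the index set $\{1,\dots,\frameSize\}$ written here and the positions $\{0,\dots,\frameSize-1\}$ used in Sect.~\ref{sec:System_Model}; these are in bijection and the choice is irrelevant.) The Markov property itself has already been established through the transition law \eqref{def:probSelectedSlot1}--\eqref{def:probSelectedSlot2}, so the only real work is in the first two properties.

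For aperiodicity, I would exhibit a state with a self-loop. Consider any configuration $\left(\stateC{1},\dots,\stateC{\nodesSystem}\right)$ in which all nodes occupy distinct positions (possible since $\frameSize>\nodesSystem$). With probability $(1-\pReservationEnd)$ each node independently keeps its position, so the configuration is mapped to itself with probability $(1-\pReservationEnd)^{\nodesSystem}>0$. Hence this state has period $1$; since the chain will be shown irreducible, aperiodicity of one state transfers to the whole chain.

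The main obstacle is irreducibility, because the support of the transition kernel is restricted by the set $\setConnection{\variableX+1}$: a transition $\underline{c}\to\underline{d}$ has positive probability only if every node that moves lands on a position that is shared with some other node in the \emph{target} configuration $\underline{d}$, and moreover a moving node may only select a position that was empty in $\underline{c}$. This means one cannot pass between two arbitrary configurations in a single step, and the argument must route through intermediate states. The plan is to show that from any state one can reach, in finitely many steps and with positive probability, a canonical "all-distinct" reference configuration, and conversely that from the reference configuration one can reach any target state; combining the two gives irreducibility. The forward direction is the delicate one: starting from a configuration with collisions, one nudges colliding nodes one at a time onto currently empty slots---each such single-node move is consistent with \eqref{def:probSelectedSlot1} provided the chosen empty slot, after the move, is occupied by at least two nodes, which may force a two-stage maneuver (first move a node onto an occupied slot to create a shared target, then in the next frame disperse). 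Because $\frameSize>\nodesSystem$ there is always at least one empty slot to work with, and by carefully sequencing these moves---reducing the number of collided pairs at each stage---one reaches the all-distinct configuration in at most $O(\nodesSystem)$ steps. The reverse direction is easier: from the all-distinct reference state, each node independently reselects with probability $\pReservationEnd$, and by choosing the uniform draws appropriately one can steer the nodes toward any admissible target configuration, again possibly in two stages to satisfy the $\setConnection{\variableX+1}$ constraint on the final step. Once every state communicates with the reference state in both directions, the chain is irreducible, and the ergodic theorem for finite, irreducible, aperiodic Markov chains yields the unique stationary distribution.
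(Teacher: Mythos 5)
Your finiteness and aperiodicity arguments are fine: the self-loop with probability $(1-\pReservationEnd)^{\nodesSystem}>0$ is exactly the argument the paper uses. The irreducibility sketch, however, contains a step that fails. The maneuver you rely on to satisfy the support constraint --- ``first move a node onto an occupied slot to create a shared target, then in the next frame disperse'' --- is not a legal transition: by the transmission rules of Sect.~\ref{subsec:transmissionRules} (and by your own preceding sentence), a reselecting node chooses only among positions in $\emptySlots{\variableX}$, i.e., positions that were \emph{empty} in the previous frame, so it can never be steered onto a currently occupied slot; a mover can only ever share its new slot with other movers. Hence the two-stage maneuver at the heart of your forward routing is impossible as stated. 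Beyond that, the routing is asserted rather than constructed: ``carefully sequencing these moves'' and ``possibly in two stages'' are not proofs, and the claim that an arbitrary target is reachable from the all-distinct reference state in about two steps is false in general when only one slot is empty (already a swap of two nodes' positions with $\frameSize=\nodesSystem+1$ requires routing both nodes through the single free slot over several frames), so the finitely-many-steps claim is left open precisely where the work is.

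The missing idea --- and the paper's route --- is to use the opposite reference state. From any configuration $\left(\stateC{1},\dots,\stateC{\nodesSystem}\right)$ pick one slot $\tau$ that is empty in it (possible since $\frameSize>\nodesSystem$) and let \emph{all} nodes reselect onto $\tau$ simultaneously; this is a single legal transition with positive probability, because every mover targets a previously empty slot (and the movers collide only with each other). From the state $(\tau,\dots,\tau)$ every slot other than $\tau$ is empty, so \emph{any} target $\left(\stateD{1},\dots,\stateD{\nodesSystem}\right)$ is reached in one further step with positive probability: nodes whose target is $\tau$ keep their slot, all others reselect onto their (previously empty) targets. Thus any state reaches any other in exactly two steps, which together with the self-loop gives irreducibility and aperiodicity on the finite state space, and the ergodic theorem applies as you say. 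Note also that your reading of $\setConnection{\variableX+1}$, requiring \emph{every} mover to land on a slot shared in the target, over-constrains the kernel; the effective support of the dynamics is simply that each mover lands in $\emptySlots{\variableX}$, under which lone movers are admissible and no auxiliary collisions need to be manufactured, so the obstacle your construction labors around is not actually there.
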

The proof of Lemma~\ref{lem:stationarity} can be found in Appendix~\ref{sec:proofStationarity}.
Thus, we can make the simplifying assumption that, for sufficiently large $\variableX$, the Markov chain $\selectedSlotVector{\variableX}$ is in its stationary distribution. Lemma~\ref{lem:stationarity} confirms that this assumption holds asymptotically independent of the system’s initial state.
The numerical evaluations we have made for Section~\ref{subsec:numericalResults} suggest that the simulated system reaches its stationary state within maximum 50000 iterations, indicating that this assumption is a reasonable one in practice.
Since the number of empty slots $\numEmptySlots{\variableX}$ is calculated deterministically from the occupancy patterns \eqref{def:emptySlots} and \eqref{def:numEmptySlots}, it has a stationary distribution.

\subsection{Independence}
\label{subsec:AssIndependence}
For better readability of this subsection we introduce abbreviations for the frame state and the number of ongoing transmission at the end of the $\indexI$-th reservation as
\begin{align}
    \label{abrev:frameState}
    \frameStateC{\fRC}{\indexI}&:=\frameState{\FrameRecursive{\indexI}{\variableX}}\\
    \label{abrev:reservationDuration}
    \ongoingTransmissionsC{\fRC}{\indexI}&:=\OngoingTransmissions{\FrameRecursive{\indexI}{\variableX}}.
\end{align}

We assume that, for a node $\nodeIndexF$ making reservations over time, each reservation is independent. In particular, we approximate the probability of the frame state $\frameState{\reservationStart{\rCollidedReselections}{x}}$ being $\rFrameState$ at the beginning of a reservation as consistent and independent of the frame states of previous reservations.

\begin{align}
       \probM{\frameState{\reservationStart{\rCollidedReselections}{x}}=\rFrameState} 
       & \approx \probM{\frameState{\reservationStart{\rCollidedReselections}{x}}=\rFrameState|\frameStateC{\fRC}{\rCollidedReselections+1}\geq 2}  \\
       & \approx \probM{\frameState{\reservationStart{\rCollidedReselections}{x}}=\rFrameState|\frameStateC{\fRC}{\rCollidedReselections+1}=1}        
\end{align}

In reality, there is a dependence between the frame states across reservations. Specifically, the outcome of a previous reservation affects the number of available slots for subsequent reservations. For instance, if a collision occurs, more slots may be available in the next reservation compared to a scenario where no collision occurred.

In order to get an idea of how accurate the approximations are, we therefore compare the probability distribution of the frame state $\frameState{\reservationStart{\rCollidedReselections}{x}}$ at the beginning of a reservation with and without presence of the conditions $\frameStateC{\fRC}{\rCollidedReselections+1} = 1$ and $\frameStateC{\fRC}{\rCollidedReselections+1} \geq 2$ on the frame state at the end of the preceding reservation.

As a measure of similarity, we use the \textit{variational distance} $\variationDistance{P,Q}$, which describes the area between the two probability functions $P$ and $Q$.
In discrete domains, the variational distance is defined as
\begin{equation}
   \variationDistance{P,Q} = 0.5 \sum_{x \in \mathbb{Z}} |P(x)-Q(x)|
\end{equation}   
and assumes values in $[0,1]$, with 0 corresponding to identical distributions. In Fig.~\ref{fig:var_dist_col}, we illustrate the variational distance between $P:=\probM{\frameState{\reservationStart{\rCollidedReselections}{x}}=\rFrameState|\frameStateC{\fRC}{\rCollidedReselections+1}\geq 2}$ and $Q:=\probM{\frameState{\reservationStart{\rCollidedReselections}{x}}=\rFrameState}$ as a function of the channel load $\frac{\nodesSystem}{\frameSize}$ for different numbers of slots per frame for $\pReservationEnd=0.1$. For small frame lengths, the variational distance is comparatively large. For longer frame lengths (min. 50 slots per frame), the variational distance becomes smaller and the approximation therefore more accurate. This can be explained by the observation that the probability $\probM{\frameState{\reservationStart{\rCollidedReselections}{x}}=\rFrameState}$ depends on the number of empty slots in frame $\variableX$. The number of free slots is influenced by a collision in frame $\variableX$ in that there are more free slots available (collided nodes occupy fewer slots). The greater the number of slots in general, the less weight these additional slots have.  
\begin{figure}
    \centering
    \includegraphics[width=0.9\linewidth]{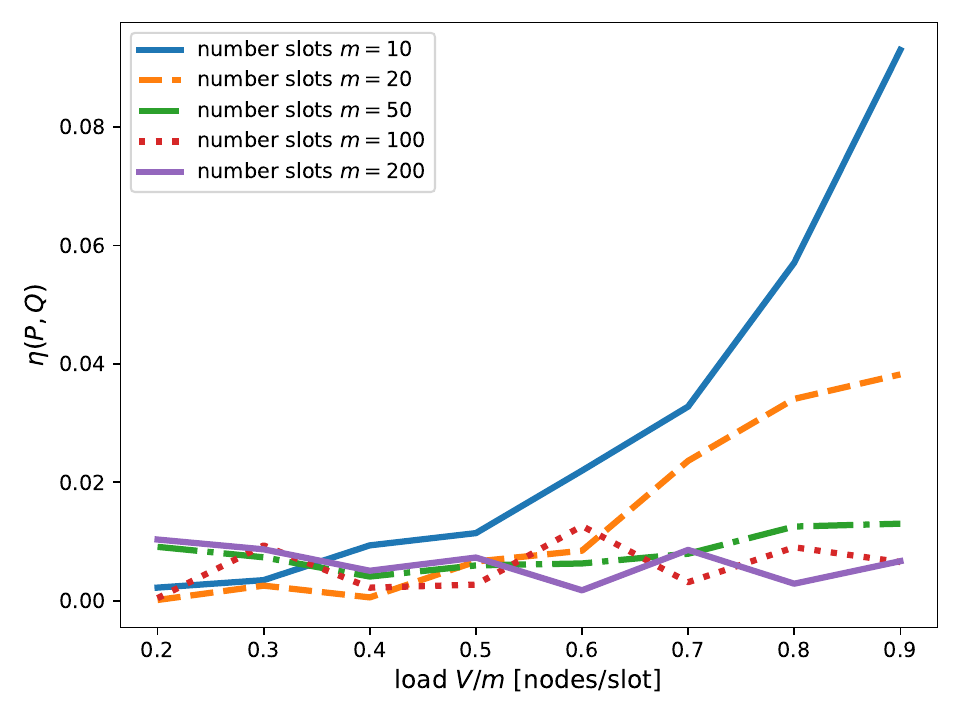}
    \caption{\textbf{Approximation validation.} The variational distance for probabilities $P:= \probM{\frameState{\reservationStart{\rCollidedReselections}{x}}=\rFrameState|\frameState{\FrameRecursive{\rCollidedReselections+1}{x}}\geq 2}$ and $Q:=\probM{\frameState{\reservationStart{\rCollidedReselections}{x}}=\rFrameState}$ is illustrated as function of the system load for different values of the frame length.}
    \label{fig:var_dist_col}
\end{figure}

In Fig.~\ref{fig:var_dist_sin}, we illustrate the variational distance between $P:= \probM{\frameState{\reservationStart{\rCollidedReselections}{x}}=\rFrameState|\frameStateC{\fRC}{\rCollidedReselections+1}=1}$ and $Q:=\probM{\frameState{\reservationStart{\rCollidedReselections}{x}}=\rFrameState}$ as a function of the channel load $\frac{\nodesSystem}{\frameSize}$ for different numbers of slots per frame for $\pReservationEnd=0.1$. The overall low variational distance shows that the approximation is satisfactory. The approximation is particularly accurate for longer frame lengths ($\gtrsim 50$) and low channel loads ($\lesssim 0.7$).    
\begin{figure}
    \centering
    \includegraphics[width=0.9\linewidth]{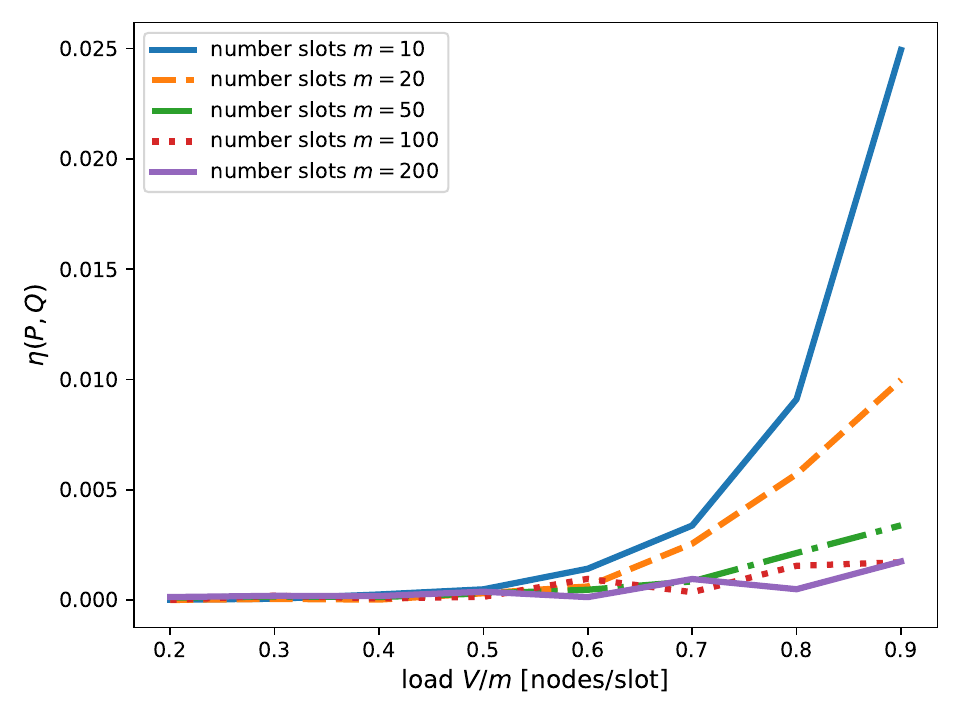}
    \caption{\textbf{Approximation validation.} The variational distance for probabilities $P:= \probM{\frameState{\reservationStart{\rCollidedReselections}{x}}=\rFrameState|\frameState{\FrameRecursive{\rCollidedReselections+1}{x}}=1}$ and $Q:=\probM{\frameState{\reservationStart{\rCollidedReselections}{x}}=\rFrameState}$ is illustrated as function of the system load for different values of the frame length.}
    \label{fig:var_dist_sin}
\end{figure}

With this we come to the conclusion that especially for large $\frameSize$ the distributions are similar enough to use the approximation \eqref{approx:probCollidedReselections}.

Having established an intuitive understanding, we now formalize the approximation. For the proofs throughout the paper, it makes sense to apply the assumption of independence between reservations directly to both the distribution of collided reservations $\collidedReselections{\variableX}$ and the probability of the number of ongoing transmissions reaching $\rOngoingTransmissions$, given a singleton transmission at the end of previous reservations $\probM{\ongoingTransmissionsC{\fRC}{\indexI}=\rOngoingTransmissions|\frameStateC{\fRC}{0}\geq2,\dots, \frameStateC{\fRC}{\rCollidedReselections-1}\geq2,\frameStateC{\fRC}{\rCollidedReselections}=1}$.

Thus, the distribution of collided reservations $\collidedReselections{\variableX}$, as defined in \eqref{def:collidedReselections}, can be approximated as
\begin{align}
\label{approx:probCollidedReselections}
     &\probM{\collidedReselections{\variableX} =\rCollidedReselections}\nonumber\\ 
    \overset{\eqref{def:collidedReselections}}&{=} \probM{\frameStateC{\fRC}{0}\geq2,\dots, \frameStateC{\fRC}{\rCollidedReselections-1}\geq2,\frameStateC{\fRC}{\rCollidedReselections}=1}\nonumber\\
    &\approx \probM{\frameStateC{\fRC}{\rCollidedReselections}=1} \cdot
      \prod_{\indexI=0}^{\rCollidedReselections-1}\probM{\frameStateC{\fRC}{\indexI}\geq2}.
\end{align}
Additionally, we use 
\begin{align}
\label{approx:OngoingTransmissions}
    &\probM{\ongoingTransmissionsC{\fRC}{\indexI}=\rOngoingTransmissions|\frameStateC{\fRC}{0}\geq2,\dots, \frameStateC{\fRC}{\rCollidedReselections-1}\geq2,\frameStateC{\fRC}{\rCollidedReselections}=1}\nonumber\\
    &\approx \probM{\ongoingTransmissionsC{\fRC}{\indexI}=\cdot|
            \frameStateC{\fRC}{\indexI}\geq2}.
\end{align}

To demonstrate the impact of the approximations on the average AoI and AoI violation probability, and therefore the usefulness of these approximations, we compare them with simulations in Section~\ref{subsec:numericalResults}, as shown in Fig.~\ref{fig:pmf_AoI_V195_m200} to Fig.~\ref{fig:outage_AoI}.

\subsection{Number of empty slots}
\label{subSec:emptySlots}

In this paper, we approximate the number of empty slots per frame according to the derivation of the expected value in \cite{Rolich_Impact_2023}. 
\begin{align}
\label{as:emptySlots}
    \numEmptySlots{\selectedSlotVector{\variableX}} \approx  \eNumEmptySlots,
\end{align}
Therefore, we give a detailed account of the derivation in \cite{Rolich_Impact_2023}. In \cite[(2)-(16)]{Rolich_Impact_2023} $\eNumEmptySlots$ is determined iteratively by starting with $\eNumEmptySlots_0=\frac{\frameSize}{\nodesSystem}$ using 
\begin{equation}
    \label{eq:emptySlots}
    \eNumEmptySlots_{\indexI+1}  = \frac{\frameSize}{1+\underline{w}_{\indexI}(\underline{\underline{I}}-\underline{\underline{A}})^{-1}\underline{e}},
\end{equation}
with $\underline{w}_{\indexI}:=(w^1,\dots,w^\nodesSystem)_\indexI$
\begin{equation}
\label{eq:vVec}
      w^{k}_\indexI= \binom{\nodesSystem}{k}\frac{\pReservationEnd}{1+\eNumEmptySlots_\indexI}^{k}
      \left( 1- \frac{\pReservationEnd}{1+\eNumEmptySlots_\indexI}\right)^{\nodesSystem-k}.
\end{equation}
Further, $\underline{\underline{I}}$ is the $\nodesSystem \times \nodesSystem$ identity matrix and $\underline{e}$ denotes a column
vector of ones’s of size $\nodesSystem$. The matrix $\underline{\underline{A}}:=(a_{m,n})_{m,n \in \{0,\dots, \nodesSystem\}}$ is defined as 
\begin{equation}
    a_{m,n}= \begin{cases}
        \binom{m}{n} \pReservationEnd^{n}(1-\pReservationEnd)^{m-n} & m\geq n\\
        0 & else
    \end{cases}
\end{equation}
As we show in Figure \ref{fig:empty_slots} using some example parameters, the distribution of empty slots is concentrated around the expected value. At very high loads, the approximation is less accurate than at low to moderate loads.
   
\begin{figure}
    \centering
    \includegraphics[width=0.9\linewidth]{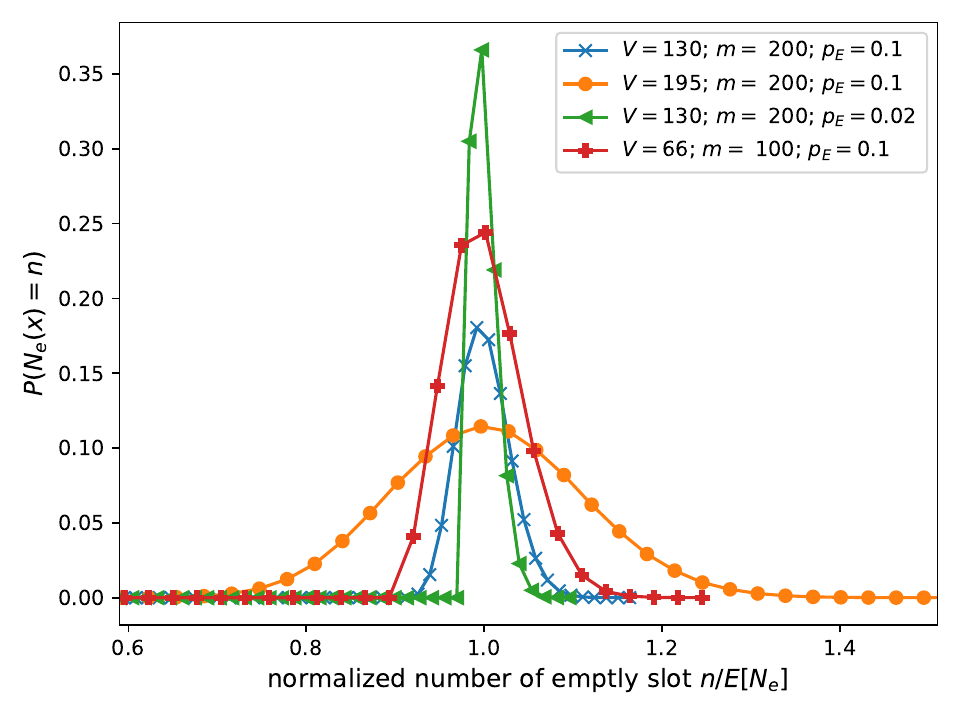}
    \caption{\textbf{Probability distribution empty slots.}}
    \label{fig:empty_slots}
\end{figure}

\section{Conclusion}
\label{sec:conclusion}
In this paper we derived an approximation for the pmf of the AoI for reservation-based channel access schemes, specifically focusing on SPS. We validate the approximation using simulations. With this model, we are able to better quantify the typical ON-OFF behavior which arises from reserving resources periodically, and thereby repeating collisions and successful transmissions alike. In addition, our model offers the first tool for quantifying the age-violation probability, which is important for safety-critical applications. Further, we demonstrate the necessity for cross-layer optimization by showing, for example, that one can accommodate almost twice as many nodes in a system with almost the same age violation probability if the reservation probability and the sampling period are chosen properly. We conclude by emphasizing the critical importance of accurately defining the requirements of the application. As illustrated by the examples, minimizing the average AoI does not guarantee a minimum age violation probability in general. 

This papers also opens up several interesting avenues of further research. This includes cross-layer optimization of communication protocols for specific safety-critical applications, e.g., in autonomous driving, making use of the insights we give on the dependence of performance and reliability on the parameter choice. Another direction would be the development of protocols that use the AoI estimates we propose to dynamically adapt network parameters to changing conditions. Finally, it could also be beneficial to explore extensions to the system model. This could include more detailed modeling for uniform reservation lengths and exploration of the connection between sampling and transmission via related metrics such as stochastic AoI \cite{zhang_statistical_2021,xiao_statistical_2024,Xiao_statistical_2025}.

\begin{appendices}
\label{sec:appendix}
\section{Proof of Theorem~\ref{thm:segmentation}}
\label{sec:proofsSegmentation}
First, we show that the collision duration $\collidedFrames{\variableX}$, compare \eqref{def:collidedFrames}, can be decomposed into reservations. The number of reservations into which we can decompose $\collidedFrames{\variableX}$ is described by the collided reservation counter $\collidedReselections{\variableX}$ as defined in \eqref{def:collidedReselections}. The duration of each reservation is described in terms of the reservation duration $\OngoingTransmissions{\variableX}$ measured from the respective end of the reservation $\FrameRecursive{\rCollidedReselections}{\variableX}$.

\begin{lemma}
\label{lem:collidedFrames}
For all $\variableX \in \natOne$, we can write the collision duration as
\begin{equation}
\label{eq:collidedFrames}
    \collidedFrames{\variableX} = \sum_{\rCollidedReselections=0}^{\collidedReselections{\variableX}-1}\OngoingTransmissions{\FrameRecursive{\rCollidedReselections}{\variableX}},
\end{equation}
where we use the convention that a summation ranging from $0$ to $-1$ has value $0$.
\end{lemma}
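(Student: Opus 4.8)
The plan is to reduce the statement to the single identity
\[
\collidedFrames{\variableX} = \variableX - \FrameRecursive{\collidedReselections{\variableX}}{\variableX},
\]
i.e.\ to show that the most recent singleton transmission of node $\nodeIndexF$ at or before frame $\variableX$ occurs exactly at the end $\FrameRecursive{\collidedReselections{\variableX}}{\variableX}$ of its $\collidedReselections{\variableX}$-th reservation, and then to telescope. Once this is established, the recursion \eqref{def:frameRecursive} immediately gives
\[
\variableX - \FrameRecursive{\collidedReselections{\variableX}}{\variableX}
= \sum_{\rCollidedReselections=0}^{\collidedReselections{\variableX}-1}\bigl(\FrameRecursive{\rCollidedReselections}{\variableX} - \FrameRecursive{\rCollidedReselections+1}{\variableX}\bigr)
= \sum_{\rCollidedReselections=0}^{\collidedReselections{\variableX}-1}\OngoingTransmissions{\FrameRecursive{\rCollidedReselections}{\variableX}},
\]
and the empty-sum convention covers the case $\collidedReselections{\variableX}=0$, in which frame $\variableX$ is itself a singleton and $\collidedFrames{\variableX}=0$.

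The key ingredient is a monotonicity property of the frame state within a single reservation: for any $\rCollidedReselections$ and any two frames $\variableY,\variableY+1$ with $\reservationStart{\rCollidedReselections}{\variableX}\le\variableY<\variableY+1\le\FrameRecursive{\rCollidedReselections}{\variableX}$, one has $\frameState{\variableY+1}\le\frameState{\variableY}$. The reason is that throughout its $\rCollidedReselections$-th reservation node $\nodeIndexF$ transmits in one fixed slot $\selection$; since a reselecting node may only move into a slot that was empty in the previous frame, cf.\ \eqref{def:emptySlots}, and slot $\selection$ is occupied by $\nodeIndexF$ in frame $\variableY$, no node can enter slot $\selection$ in frame $\variableY+1$. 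Hence the set of nodes occupying $\selection$ can only shrink from $\variableY$ to $\variableY+1$, and $\frameState{\cdot}$ (the number of such nodes) is non-increasing over the reservation; in particular $\frameState{\variableY}\ge\frameState{\FrameRecursive{\rCollidedReselections}{\variableX}}$ for every frame $\variableY$ of the $\rCollidedReselections$-th reservation.

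With this in hand I would locate the most recent singleton. By definition \eqref{def:collidedReselections} of $\collidedReselections{\variableX}$ as a minimum, and since $\frameState{\cdot}\ge1$ always, we have $\frameState{\FrameRecursive{\rCollidedReselections}{\variableX}}\ge2$ for every $\rCollidedReselections<\collidedReselections{\variableX}$; combined with monotonicity this forces $\frameState{\cdot}\ge2$ on the whole interval $\{\FrameRecursive{\collidedReselections{\variableX}}{\variableX}+1,\dots,\variableX\}$, which is precisely the union of reservations $0,\dots,\collidedReselections{\variableX}-1$ (using $\reservationStart{\rCollidedReselections}{\variableX}=\FrameRecursive{\rCollidedReselections+1}{\variableX}+1$ and the fact that each reservation has length $\OngoingTransmissions{\cdot}\ge1$). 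On the other hand $\frameState{\FrameRecursive{\collidedReselections{\variableX}}{\variableX}}=1$ by definition of $\collidedReselections{\variableX}$. Therefore $\FrameRecursive{\collidedReselections{\variableX}}{\variableX}$ is the largest frame index not exceeding $\variableX$ with frame state equal to $1$, which by \eqref{def:collidedFrames} means $\collidedFrames{\variableX}=\variableX-\FrameRecursive{\collidedReselections{\variableX}}{\variableX}$, as desired. (That all the minima involved are over non-empty sets follows from the convention on $\selectedSlotVector{0}$ and $\selectedSlotVector{1}$ together with the monotonicity property, since it makes frame $1$ a singleton contained in a reservation whose end is then also a singleton.)

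I expect the main obstacle to be the monotonicity step, and within it the argument that no node can move into node $\nodeIndexF$'s reserved slot during a reservation; this rests on the precise reselection rule (reselections draw only from slots empty in the previous frame) and on $\nodeIndexF$ occupying that slot in \emph{every} frame of the reservation, including its first. Everything else is bookkeeping with the recursive definitions \eqref{def:frameRecursive}, \eqref{def:OngoingTransmissions}, \eqref{def:collidedReselections} and the telescoping sum.
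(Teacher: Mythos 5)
Your proposal is correct and follows essentially the same route as the paper's proof: telescoping $\sum_{\rCollidedReselections=0}^{\collidedReselections{\variableX}-1}\OngoingTransmissions{\FrameRecursive{\rCollidedReselections}{\variableX}} = \variableX - \FrameRecursive{\collidedReselections{\variableX}}{\variableX}$ via \eqref{def:frameRecursive}, and then identifying $\FrameRecursive{\collidedReselections{\variableX}}{\variableX}$ as the most recent singleton by combining $\frameState{\FrameRecursive{\collidedReselections{\variableX}}{\variableX}}=1$ with the non-increase of the frame state within each reservation (no node can reselect into an occupied slot), which forces $\frameState{\cdot}\ge 2$ on all intermediate frames. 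The handling of the $\collidedReselections{\variableX}=0$ case via the empty-sum convention also matches the paper.
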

\begin{proof}
    If $\collidedReselections{\variableX} = 0$, then by \eqref{def:collidedReselections} we have $\frameState{\variableX} = 1$, so by \eqref{def:collidedFrames}, we have $\collidedFrames{\variableX}=0$ and the lemma holds. Now let $\collidedReselections{\variableX} \in \{1,\dots,\variableX\}$. Then, clearly, the following holds:
    \begin{align*}
        \sum_{\rCollidedReselections=0}^{\collidedReselections{\variableX}-1}\OngoingTransmissions{\FrameRecursive{\rCollidedReselections}{\variableX}}
        \overset{\eqref{def:frameRecursive}}&{=} \sum_{\rCollidedReselections=0}^{\collidedReselections{\variableX}-1}\left( \FrameRecursive{\rCollidedReselections}{\variableX} - \FrameRecursive{\rCollidedReselections+1}{\variableX}\right)\nonumber \\
        \overset{}&{=}\FrameRecursive{0}{\variableX} - \FrameRecursive{\collidedReselections{\variableX}}{\variableX} \nonumber \\
        \overset{(a)}&{=} \variableX - \FrameRecursive{\collidedReselections{\variableX}}{\variableX},
    \end{align*}
    where (a) stands for the definition after \eqref{def:frameRecursive}.
    The proof is complete if we can show that $\variableX - \FrameRecursive{\collidedReselections{\variableX}}{\variableX} = \collidedFrames{\variableX}$. According to \eqref{def:collidedFrames}, this follows if 
    \begin{equation}
    \label{eq:statement1}
        \frameState{\FrameRecursive{\collidedReselections{\variableX}}{\variableX}}=1   
    \end{equation}
    and if for all $j \in \{0,\dots, \variableX -\FrameRecursive{\collidedReselections{\variableX}}{\variableX}-1\}$
    \begin{equation}
        \frameState{\variableX-j}\geq 2.
    \end{equation}
    Setting $i:=\variableX - \FrameRecursive{\collidedReselections{\variableX}}{\variableX} - j$, we equivalently have for all $\indexI \in\{1, \dots,\variableX- \FrameRecursive{\collidedReselections{\variableX}}{\variableX}\}$  
    \begin{equation}
    \label{eq:statement2}
        \frameState{\FrameRecursive{\collidedReselections{\variableX}}{\variableX}+i}\geq 2.
    \end{equation}
   \eqref{eq:statement1} follows from \eqref{def:collidedReselections} and \eqref{eq:statement2} can be argued in two steps. First, according to the transmission rules in Section~\ref{subsec:transmissionRules}, the position $\selectedSlot{}{\reservationStart{\rCollidedReselections}{\variableX}}$ is occupied in every frame from frame $\reservationStart{\rCollidedReselections}{\variableX}$ to frame $\FrameRecursive{\rCollidedReselections}{\variableX}$. That is, no nodes other than those already transmitting at $\selectedSlot{}{\reservationStart{\rCollidedReselections}{\variableX}}$ transmit at this position. Therefore,
   \begin{equation*}
        \frameState{\reservationStart{\rCollidedReselections}{\variableX}} \geq \dots \geq \frameState{\FrameRecursive{\rCollidedReselections}{\variableX}}
   \end{equation*}
    for all $\rCollidedReselections \in \{0,\dots, \variableX\}$.
    Second, by \eqref{def:collidedReselections}, the frame state at the end of a reservation is for all $\rCollidedReselections<\collidedReselections{\variableX}$ 
    \begin{equation*}
        \frameState{\FrameRecursive{\rCollidedReselections}{\variableX}}\geq 2.
    \end{equation*}
    Finally, we note that by the definition of $\reservationStart{\indexN}{\variableX}$ after \eqref{def:frameRecursive}, $\FrameRecursive{\collidedReselections{\variableX}}{\variableX}+1 = \reservationStart{\collidedReselections{\variableX}-1}{\variableX}$. Hence, the term $\FrameRecursive{\collidedReselections{\variableX}}{\variableX}+i$ which appears in \eqref{eq:statement2} is between $\reservationStart{\rCollidedReselections}{\variableX}$ and $\FrameRecursive{\rCollidedReselections}{\variableX}$ with $\rCollidedReselections<\collidedReselections{\variableX}$ for all possible values that $i$ ranges over, which concludes the proof of \eqref{eq:statement2}.
\end{proof}
\begin{proof}[Proof of Theorem~\ref{thm:segmentation}]
    Theorem~\ref{thm:segmentation} now follows by substituting \eqref{eq:collidedFrames} into the definition of the AoI \eqref{def:AoI}.
\end{proof}

\section{Proof of Lemma~\ref{lem:stationarity}}
\label{sec:proofStationarity}
 \begin{proof}
    According to \cite[Theorem 1.8.3]{Norris_1997} an irreducible, aperiodic Markov chain with an invariant distribution has a stationary distribution.
    To prove \textit{irreducibility}, we show that the transition from any $\left(\stateD{1},\dots,\stateD{\nodesSystem}\right)$ to any other state $\left(\stateC{1},\dots,\stateC{\nodesSystem}\right)$ where $\stateC{},\stateD{} \in \{1,\dots,\frameSize\}$ is feasible for all $\nodeIndexF \in \{1,\dots,\nodesSystem\}$ \cite[Proposition 8.4.10]{rosental_2006}. For a system with $\nodesSystem < \frameSize$, as described in Section~\ref{sec:System_Model}, $\{1,\dots,\frameSize\}\backslash\{\stateC{1},\dots,\stateC{\nodesSystem}\}$ is not empty. 
    Let $\tau \in \{1,\dots,\frameSize\}\backslash\{\stateC{1},\dots,\stateC{\nodesSystem}\}$, then according to \eqref{def:probSelectedSlot1}
    \begin{equation*}
        \probM{\selectedSlotVector{\variableX+1} = \left(\tau,\dots,\tau\right)|\selectedSlotVector{\variableX} = \left(\stateC{1},\dots,\stateC{\nodesSystem}\right)}>0
    \end{equation*}
    and
    \begin{equation*}
        \probM{\selectedSlotVector{\variableX+1} = \left(\stateD{1},\dots,\stateD{\nodesSystem}\right)|\selectedSlotVector{\variableX} = \left(\tau,\dots,\tau\right)}>0 \text{.}
    \end{equation*}
    The proof of \textit{aperioidcity} follows from \eqref{def:probSelectedSlot1} as 
    \begin{equation*}
        \probM{\selectedSlotVector{\variableX+1} = \left(\stateD{1},\dots,\stateD{\nodesSystem}\right)|\selectedSlotVector{\variableX} = \left(\stateD{1},\dots,\stateD{\nodesSystem}\right)}>0 \text{.} \qedhere
    \end{equation*}
\end{proof}
\section{Derivation of Approximation~\ref{thm:pmfAoI}}
\label{sec:derivation_pmf}
\subsection{Determining the duration of one collided reservation}
First, we derive all pmfs necessary to specify the pmf of the duration of a collided reservation. We begin with the probability $\probM{\frameState{\reservationStart{i}{\variableX}}=\rFrameState} $ that at the start of the $i-th$ reservation $\reservationStart{i}{\variableX}$ a simultaneous transmission of $\rFrameState$ nodes takes place.
\begin{lemma}
\label{lem:probFrameStateStart}
    For the system in stationary state and all $\pReservationEnd \in [0,1]$, $\variableX \in \natOne$ and $\indexI \in \natZero$, the probability distribution
    $\probM{\frameState{\reservationStart{i}{\variableX}}=\rFrameState} $ is stationary and described as
    \begin{align}
        \label{eq:probFrameStateReselection}
        \probM{\frameState{\reservationStart{i}{\variableX}}=\rFrameState} 
        &=\sum_{\rNumEmptySlots=1}^{\frameSize} 
            \probNumEmptySlots{\rNumEmptySlots} \cdot
            \binomial{\nodesSystem-1}{\frac{\pReservationEnd}{\rNumEmptySlots}}{\rFrameState-1}.
    \end{align}
\end{lemma}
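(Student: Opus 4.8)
The plan is to condition on which frame the $\indexI$-th reservation actually starts in. In the stationary regime (which exists by Lemma~\ref{lem:stationarity}), node $\nodeIndexF$'s reselection decisions form an i.i.d.\ Bernoulli$(\pReservationEnd)$ process, and $\reservationStart{\indexI}{\variableX}$ is by construction a reselection frame; at that frame node $\nodeIndexF$ draws its new slot $\selection$ uniformly from the empty slots $\emptySlots{\reservationStart{\indexI}{\variableX}-1}$ of the preceding frame, and $\frameState{\reservationStart{\indexI}{\variableX}}$ equals $1$ (for node $\nodeIndexF$ itself) plus the number of other nodes that transmit in slot $\selection$ in that frame. So I need two ingredients: the distribution of $\numEmptySlots{\reservationStart{\indexI}{\variableX}-1}$, and, conditionally on it, the distribution of the number of nodes colliding with node $\nodeIndexF$ in $\selection$.

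First I would show that $\numEmptySlots{\reservationStart{\indexI}{\variableX}-1}$ has the stationary law \eqref{def:emptySlotsStatinarity}. Fix a frame $\variableY\le\variableX$. The event $\{\reservationStart{\indexI}{\variableX}=\variableY\}$ holds exactly when node $\nodeIndexF$ reselects at frame $\variableY$ and reselects exactly $\indexI$ further times in frames $\variableY+1,\dots,\variableX$; hence it is a function of node $\nodeIndexF$'s reselection coins at frames $\variableY,\dots,\variableX$ only. On the other hand $\selectedSlotVector{\variableY-1}$---and therefore the deterministic quantity $\numEmptySlots{\variableY-1}$---is a function of the randomness generated at frames $\le\variableY-1$, which is independent of those coins. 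Thus $\{\reservationStart{\indexI}{\variableX}=\variableY\}$ is independent of $\selectedSlotVector{\variableY-1}$, so conditioning on it leaves $\numEmptySlots{\variableY-1}$ distributed according to its stationary law $\probNumEmptySlots{\cdot}$.

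Next I would compute, conditionally on $\reservationStart{\indexI}{\variableX}=\variableY$ and $\numEmptySlots{\variableY-1}=\rNumEmptySlots$, the law of the frame state. Node $\nodeIndexF$ picks $\selection$ uniformly from the $\rNumEmptySlots$-element set $\emptySlots{\variableY-1}$. For any other node $\nodeIndexS\neq\nodeIndexF$: if $\nodeIndexS$ keeps its slot, that slot is occupied in frame $\variableY-1$, hence lies outside $\emptySlots{\variableY-1}$ and so differs from $\selection$; if $\nodeIndexS$ reselects (probability $\pReservationEnd$), it lands in the specific slot $\selection$ with probability $1/\rNumEmptySlots$. So $\nodeIndexS$ collides with $\nodeIndexF$ with probability $\pReservationEnd/\rNumEmptySlots$, and since this value does not depend on the realised $\selection$ and the per-node reselection coins and slot choices are mutually independent, the collision indicators of the $\nodesSystem-1$ other nodes remain independent after marginalising out $\selection$. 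Hence the number of colliders is $\binomial{\nodesSystem-1}{\pReservationEnd/\rNumEmptySlots}{\cdot}$-distributed, giving $\probM{\frameState{\reservationStart{\indexI}{\variableX}}=\rFrameState\mid\reservationStart{\indexI}{\variableX}=\variableY,\,\numEmptySlots{\variableY-1}=\rNumEmptySlots}=\binomial{\nodesSystem-1}{\pReservationEnd/\rNumEmptySlots}{\rFrameState-1}$. Averaging over $\rNumEmptySlots\sim\probNumEmptySlots{\cdot}$ and then over $\variableY$---the resulting inner expression being free of $\variableY$, so it factors through $\sum_{\variableY}\probM{\reservationStart{\indexI}{\variableX}=\variableY}=1$---yields the claimed formula, and its independence of $\variableX$ and $\indexI$ gives the stationarity assertion.

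The main obstacle is the first step: the inspection-paradox-flavoured point that singling out the (random) frame in which the $\indexI$-th reservation starts does not bias the channel occupancy one frame earlier. The resolution is to notice that node $\nodeIndexF$'s reselection process is an independent Bernoulli process and that $\{\reservationStart{\indexI}{\variableX}=\variableY\}$ depends only on its values on $\{\variableY,\dots,\variableX\}$. A lesser subtlety is the conditional-independence check in the second step, where the shared random slot $\selection$ must be argued not to spoil independence of the collision indicators; this works because the per-node collision probability $\pReservationEnd/\rNumEmptySlots$ is the same for every admissible value of $\selection$.
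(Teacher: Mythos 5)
Your proof is correct and takes essentially the same route as the paper: condition on the number of empty slots $\rNumEmptySlots$ in the frame preceding the reservation start, show the number of nodes colliding with $\nodeIndexF$ there is $\binomial{\nodesSystem-1}{\pReservationEnd/\rNumEmptySlots}{\cdot}$-distributed, and average over the stationary law $\probNumEmptySlots{\cdot}$. The only differences are cosmetic: you get the binomial in one step from independent per-node collision indicators of probability $\pReservationEnd/\rNumEmptySlots$, whereas the paper composes $\binomial{\nodesSystem-1}{\pReservationEnd}{\cdot}$ reselecting nodes with $\binomial{\rReselectingNodes}{1/\rNumEmptySlots}{\cdot}$ colliders, and you make explicit the inspection-paradox point that conditioning on $\reservationStart{\indexI}{\variableX}=\variableY$ does not bias $\numEmptySlots{\variableY-1}$, which the paper uses implicitly.
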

\begin{proof}
    For this proof, we define the number of nodes performing a reselection $\reselectingNodes{\reservationStart{\indexI}{\variableX}}$ in the same frame $\reservationStart{\indexI}{\variableX}$ as node $\nodeIndexF$ as 
    \begin{equation}
       \reselectingNodes{\reservationStart{\indexI}{\variableX}} 
       := \sum_{\substack{\indexJ=1\\\indexJ\neq \nodeIndexF}}^{\nodesSystem}\mathds{1}_{\selectedSlot{\indexJ}{\reservationStart{\indexI}{\variableX}}\neq\selectedSlot{\indexJ}{\reservationStart{\indexI}{\variableX}-1}}.
    \end{equation}
    Clearly, the probability distribution follows a binomial distribution described by
    \begin{equation}
        \probM{\reselectingNodes{\reservationStart{\indexI}{\variableX}}=\rReselectingNodes} = \binomial{\nodesSystem-1}{\pReservationEnd}{\rReselectingNodes}.
    \end{equation}
    Further, according to the transmission rules in Section~\ref{subsec:transmissionRules}, the probability distribution $\probM{
        \frameState{\reservationStart{\indexI}{\variableX}}=\rFrameState|
        \reselectingNodes{\reservationStart{\indexI}{\variableX}}=\rReselectingNodes,
        \numEmptySlots{\FrameRecursive{\indexI+1}{\variableX}} =\rNumEmptySlots
        }$ is also described by a binomial distribution as
    \begin{align}
        &\probM{
        \frameState{\reservationStart{\indexI}{\variableX}}=\rFrameState|
        \reselectingNodes{\reservationStart{\indexI}{\variableX}}=\rReselectingNodes,
        \numEmptySlots{\FrameRecursive{\indexI+1}{\variableX}} =\rNumEmptySlots
        }\nonumber\\ 
        =& \binomial{\rReselectingNodes}{\frac{1}{\rNumEmptySlots}}{\rFrameState-1}.
    \end{align}
    Due to \cite[chapter 2.1., first example]{Pinsky_Introduction_2011}, we can write 
    \begin{align}
    \label{eq:probDoubleBinimial}
        \probM{
        \frameState{\reservationStart{\indexI}{\variableX}}=\rFrameState|
        \numEmptySlots{\FrameRecursive{\indexI+1}{\variableX}} =\rNumEmptySlots
        } 
        = \binomial{\nodesSystem-1}{\frac{\pReservationEnd}{\rNumEmptySlots}}{\rFrameState-1}.
    \end{align}
    Thus, we can write $\probM{\frameState{\reservationStart{i}{\variableX}}=\rFrameState}$ as 
    \begin{align*}
        &\hphantom{{}={}}\probM{\frameState{\reservationStart{i}{\variableX}}=\rFrameState}  \\
        \overset{LTP}&{=} \sum_{\rNumEmptySlots=1}^{\frameSize} 
            \probM{
            \numEmptySlots{\FrameRecursive{\indexI+1}{\variableX}} =\rNumEmptySlots
            } 
            \probM{
            \frameState{\reservationStart{\indexI}{\variableX}}=\rFrameState|
            \numEmptySlots{\FrameRecursive{\indexI+1}{\variableX}} =\rNumEmptySlots}\\
        \overset{\eqref{def:emptySlotsStatinarity}}&{=}
        \sum_{\rNumEmptySlots=1}^{\frameSize} 
            \probNumEmptySlots{\rNumEmptySlots} 
            \probM{
            \frameState{\reservationStart{\indexI}{\variableX}}=\rFrameState|
            \numEmptySlots{\FrameRecursive{\indexI+1}{\variableX}} =\rNumEmptySlots}.     
    \end{align*}
    Combining this with \eqref{eq:probDoubleBinimial} concludes the proof.
\end{proof}
Knowing the state $\rFrameState$ at the start of a reservation $\reservationStart{\indexN}{\variableX}$, we can derive the transition probability $\probM{\frameState{\variableX} \geq \rFrameState| \OngoingTransmissions{\variableX}=\rOngoingTransmissions }$ to a collision in frame $\variableX$ under the condition that the reservation started $\rOngoingTransmissions-1$ frames ago, where the reservation duration $\OngoingTransmissions{\variableX}$ is defined as in \eqref{def:OngoingTransmissions}. 

Clearly, the probability that a node changes position from one frame to the next is given by
\begin{equation}
\label{eq:probFrameReselection}
    \probM{\selectedSlot{}{\variableX-1}\neq\selectedSlot{}{\variableX}} =\pReservationEnd.
\end{equation} 
Therefore, the distribution $\probM{\OngoingTransmissions{\variableX} = \rOngoingTransmissions}$ is stationary.

\begin{lemma}
\label{lem:probFrameStateEnd}
    For the system in stationary state and for all $\rOngoingTransmissions \in \{0,\dots, \variableX\}$, with $\variableX \in \natOne$, the probability distribution $\probM{\frameState{\variableX} \geq 2| \OngoingTransmissions{\variableX}=\rOngoingTransmissions }$ is described as
    \begin{multline}
    \label{eq:probFrameStateEnd}
        \probM{\frameState{\variableX} \geq 2| \OngoingTransmissions{\variableX}=\rOngoingTransmissions }\\
        = 1- \sum_{\rFrameState=1}^{\nodesSystem-1} \left(1-\left(1-\pReservationEnd\right)^{\rOngoingTransmissions-1}\right)^{\rFrameState} \probM{\frameState{\reservationStart{1}{\variableX}}=\rFrameState} 
    \end{multline}
\end{lemma}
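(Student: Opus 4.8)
The plan is to pass to the complementary event and then condition on the state at the start of the reservation. Since $\frameState{\variableX}\ge 1$ by \eqref{def:frameState}, we have $\probM{\frameState{\variableX}\ge 2\mid\OngoingTransmissions{\variableX}=\rOngoingTransmissions}=1-\probM{\frameState{\variableX}=1\mid\OngoingTransmissions{\variableX}=\rOngoingTransmissions}$, so it suffices to find the probability that node $\nodeIndexF$'s reservation containing frame $\variableX$, which has been running for $\rOngoingTransmissions$ frames, is a singleton in frame $\variableX$. I would begin by applying the law of total probability, conditioning additionally on the frame state $\rFrameState$ at the start of that reservation. The first point to establish is that this extra conditioning does not alter the distribution of the start state: the event $\{\OngoingTransmissions{\variableX}=\rOngoingTransmissions\}$ is a function only of node $\nodeIndexF$'s own keep/reselect coin flips (a reselection at the start of the reservation, followed by ``keep'' in each of the next $\rOngoingTransmissions-1$ frames), and by the transmission rules of Sect.~\ref{subsec:transmissionRules} these flips are independent of the other nodes' choices and of which slot $\nodeIndexF$ landed on; hence the start-state distribution is unchanged and, since the marginal in Lemma~\ref{lem:probFrameStateStart} does not depend on the reservation index, it equals $\probM{\frameState{\reservationStart{1}{\variableX}}=\rFrameState}$.

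The core step is to evaluate $\probM{\frameState{\variableX}=1\mid\frameState{\reservationStart{1}{\variableX}}=\rFrameState,\ \OngoingTransmissions{\variableX}=\rOngoingTransmissions}$, for which I would use two structural facts about the protocol. First, a reservation begins with a reselection onto a slot that was empty in the previous frame, so every one of the $\rFrameState$ nodes occupying node $\nodeIndexF$'s slot at the start of the reservation must itself have reselected onto that slot in that same frame, i.e., all of them started a fresh reservation simultaneously. Second, since new reservations are only placed on slots that were empty in the previous frame, no additional node can ever join an already-occupied slot; hence the occupancy of node $\nodeIndexF$'s slot is non-increasing along the reservation, and $\frameState{\variableX}=1$ holds exactly when each of the $\rFrameState-1$ other nodes has departed by frame $\variableX$. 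A given such node departs at its next reselection (it cannot re-pick an occupied slot, and cannot return while the slot stays occupied), and by the transmission rules it keeps its position with probability $1-\pReservationEnd$ independently in each frame; since frame $\variableX$ lies $\rOngoingTransmissions-1$ frames after the common start, it is still present with probability $(1-\pReservationEnd)^{\rOngoingTransmissions-1}$ and has left with probability $1-(1-\pReservationEnd)^{\rOngoingTransmissions-1}$. These departure events are mutually independent across the $\rFrameState-1$ nodes and independent of node $\nodeIndexF$'s reservation length, so the conditional probability of a singleton is $\bigl(1-(1-\pReservationEnd)^{\rOngoingTransmissions-1}\bigr)^{\rFrameState-1}$.

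Summing over $\rFrameState$ and taking the complement then yields \eqref{eq:probFrameStateEnd}, and stationarity is immediate because the right-hand side involves only the constant $\pReservationEnd$ and the stationary marginal of Lemma~\ref{lem:probFrameStateStart}. I expect the main obstacle to be making the independence claims airtight: one has to cleanly separate node $\nodeIndexF$'s keep/reselect process, the other nodes' keep/reselect processes, and the slot-selection randomness, and verify in particular that conditioning on $\{\OngoingTransmissions{\variableX}=\rOngoingTransmissions\}$ and on $\frameState{\reservationStart{1}{\variableX}}=\rFrameState$ leaves the post-start coin flips of the other nodes i.i.d.\ Bernoulli with parameter $\pReservationEnd$. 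The monotonicity observation -- no new arrivals once a slot is occupied -- is the ingredient that reduces the event $\{\frameState{\variableX}=1\}$ to the clean statement ``all $\rFrameState-1$ other nodes have reselected at least once during the last $\rOngoingTransmissions-1$ frames''.
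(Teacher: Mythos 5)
Your argument is correct and essentially the same as the paper's proof: both pass to the complementary singleton event, condition on the occupancy of node $\nodeIndexF$'s slot at the start of the current reservation (you on the frame state $\rFrameState$, the paper on the set $\setCollision{\nodeIndexF}{\reservationStart{1}{\variableX}}$ of colliding nodes), and combine the fact that no node can join an already-occupied slot with independent geometric reservation lengths of the colliders to obtain the factor $\left(1-(1-\pReservationEnd)^{\rOngoingTransmissions-1}\right)^{\rFrameState-1}$. Note that, exactly like the paper's own derivation, your computation produces the exponent $\rFrameState-1$ with the sum over $\rFrameState$ running up to $\nodesSystem$, so the exponent $\rFrameState$ and upper limit $\nodesSystem-1$ appearing in the displayed equation \eqref{eq:probFrameStateEnd} are an index-shift typo in the statement rather than a gap in your reasoning.
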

  \begin{proof}
For node $\nodeIndexF,\nodeIndexS \in \{1,\dots,\nodesSystem\}$ and $\variableX \in \natOne$, we define the number of frames $\reservationLength{\nodeIndexF}{\variableX}$ in frame $\variableX$ until a new reservation starts for node $\nodeIndexS$ as 
\begin{equation}
\label{def:reservationLength}
    \reservationLength{\nodeIndexS}{\variableX}:=\min\{\rReservationLength\in \natOne : \selectedSlot{}{\variableX+\rReservationLength-1}\neq\selectedSlot{}{\variableX+\rReservationLength}\} \text{.}
\end{equation}
According to \eqref{eq:probFrameReselection}, the cdf of $\reservationLength{\nodeIndexS}{\variableX}$ can be described as 
\begin{equation}
    \label{prob:ReservationLength}
    \probM{\reservationLength{\nodeIndexS}{\variableX} < \rReservationLength} = 1-(1-\pReservationEnd)^{\rReservationLength-1}.
\end{equation}

Further, we use the set $\setCollision{\nodeIndexF}{\variableX}$ to denote all nodes transmitting in the same time slot in frame $\variableX$ as node $\nodeIndexF$. It is defined as
\begin{equation}
    \label{def:setCollision}
    \setCollision{\nodeIndexF}{\variableX}:=\{\nodeIndexS \in \{1,\dots,\nodesSystem\}: \nodeIndexS \neq \nodeIndexF, \selectedSlot{\nodeIndexS}{\variableX}=\selectedSlot{\nodeIndexF}{\variableX}\} \text{.}
\end{equation}
The set of all possible values of $\setCollision{\nodeIndexF}{\variableX}$ is denoted by $\rSetCollision$.

Finally, we can derive the probability for a collided transmission of node $\nodeIndexF$ in frame $\variableX$ given the number of ongoing transmissions $\OngoingTransmissions{\variableX}=\rOngoingTransmissions$ since the last reselection of node $\nodeIndexF$ (compare \eqref{def:OngoingTransmissions}).
\begin{align*}
\label{proof:probCollision}
    &\probM{\frameState{\variableX} \geq 2| \OngoingTransmissions{\variableX}=\rOngoingTransmissions} \\
    \overset{(a)}&{=} 1- \probM{\frameState{\variableX} = 1 | \OngoingTransmissions{\variableX}=\rOngoingTransmissions } \\
    \overset{LTP}&{=} 1- \sum_{s \in \rSetCollision} 
                                \probM{\frameState{\variableX} = 1| \setCollision{\nodeIndexF}{\bar{\variableX}} =s , 
                                    \OngoingTransmissions{\variableX}=\rOngoingTransmissions }
                                \probM{\setCollision{\nodeIndexF}{\bar{\variableX}} =s }\\
    \overset{(b)}&{=} 1- \sum_{s \in \rSetCollision} 
                                \probM{\setCollision{\nodeIndexF}{\bar{\variableX}} =s }
                                \prod_{\nodeIndexS \in s }
                                \probM{\reservationLength{\nodeIndexS}{\bar{\variableX}}< \rOngoingTransmissions}\\
    \overset{\eqref{prob:ReservationLength}}&{=} 1- \sum_{s \in \rSetCollision } 
                                \probM{\setCollision{\nodeIndexF}{\bar{\variableX}} =s }
                                \prod_{\nodeIndexS \in s }
                                \left(1-\left(1-\pReservationEnd\right)^{\rOngoingTransmissions-1}\right)\\
    \overset{}&{=} 1- \sum_{s \in \rSetCollision} 
                                \probM{\setCollision{\nodeIndexF}{\bar{\variableX}} =s }
                                \left(1-\left(1-\pReservationEnd\right)^{\rOngoingTransmissions-1}\right)^{|s |}\\                          
    \overset{}&{=} 1-\sum_{\rFrameState=0}^{\nodesSystem-1} \
                        \probM{|\setCollision{\nodeIndexF}{\bar{\variableX}}| = \rFrameState}
                                \left(1-\left(1-\pReservationEnd\right)^{\rOngoingTransmissions-1}\right)^{\rFrameState}\\
    \overset{(c)}&{=} 1-\sum_{\rFrameState=0}^{\nodesSystem-1} \
                        \probM{\frameState{\bar{\variableX}} = \rFrameState+1}
                                \left(1-\left(1-\pReservationEnd\right)^{\rOngoingTransmissions-1}\right)^{\rFrameState}\\
    \overset{}&{=} 1-\sum_{\rFrameState=1}^{\nodesSystem} \
                        \probM{\frameState{\bar{\variableX}}=\rFrameState}
                                \left(1-\left(1-\pReservationEnd\right)^{\rOngoingTransmissions-1}\right)^{\rFrameState-1}
\end{align*}
where $\bar{\variableX}:=\reservationStart{1}{\variableX}$ 
and (a) is because $\frameState{\variableX}$ can only take values in $\{1,\dots,\nodesSystem\}$. In (b), we leverage the information that, according to the transmission rules in Section~\ref{subsec:transmissionRules}, only the nodes in $\setCollision{\nodeIndexF}{\bar{\variableX}}$ can contribute to a collision in $\variableX$. To determine the probability of a singleton transmission of node $\nodeIndexF$ in $\variableX$ we need to determine the probability that each node in $\setCollision{\nodeIndexF}{\bar{\variableX}}$ has a reservation length below $\rOngoingTransmissions$. As the reservation length of each node is independent of all other nodes, we can write it as a product. For each node, the number of frames until the reservation ends is described by \eqref{def:reservationLength} and is independent of the reservation length of any other. (c) is due to the definition \eqref{def:frameState} and \eqref{def:setCollision}.
\end{proof} 

\begin{lemma}
\label{lem:probOngoingTransmissions}
  For all $\variableX \in \natOne$ and $\rOngoingTransmissions \in \{1,\dots, \variableX\}$ the distribution of $\OngoingTransmissions{\variableX}$ is described as
  \begin{equation}
      \label{def:probOngoingTransmissions}
       \probM{\OngoingTransmissions{\variableX} = \rOngoingTransmissions} = \pReservationEnd \left( 1-\pReservationEnd \right)^{\rOngoingTransmissions}. 
  \end{equation}
\end{lemma}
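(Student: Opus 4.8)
The plan is to reduce the event $\{\OngoingTransmissions{\variableX}=\rOngoingTransmissions\}$ to a statement about the per-frame ``keep versus reselect'' decisions of node $\nodeIndexF$, and then to exploit that, by the transmission rules of Sect.~\ref{subsec:transmissionRules}, these decisions are mutually independent Bernoulli trials with reselection probability $\pReservationEnd$.

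Concretely, I would first unfold the minimum in \eqref{def:OngoingTransmissions}: $\{\OngoingTransmissions{\variableX}=\rOngoingTransmissions\}$ is exactly the event that $\selectedSlot{}{\variableX-\rOngoingTransmissions}\neq\selectedSlot{}{\variableX-\rOngoingTransmissions+1}$ while $\selectedSlot{}{\variableX-j}=\selectedSlot{}{\variableX-j+1}$ for every $j\in\{1,\dots,\rOngoingTransmissions-1\}$ --- i.e.\ the position was last changed $\rOngoingTransmissions$ frames ago and has been kept at every frame since. I therefore need the joint law of the indicators $\mathds{1}_{\selectedSlot{}{\variableX-j}\neq\selectedSlot{}{\variableX-j+1}}$. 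The observation that makes this immediate is that each such indicator coincides with the indicator that node $\nodeIndexF$ \emph{reselects} at the start of frame $\variableX-j+1$: keeping leaves the position unchanged by definition, and a reselection necessarily changes it, since the reselecting node draws uniformly from $\emptySlots{\variableX-j}$, which by \eqref{def:emptySlots} excludes its own, currently occupied, position. By the transmission rules --- equivalently, by the memorylessness of the geometric reservation counter --- the reselection decisions in distinct frames are mutually independent, each with probability $\pReservationEnd$; this last fact is precisely \eqref{eq:probFrameReselection}. Multiplying over the one frame in which a reselection occurred and the intervening frames in which the position was kept then yields a geometric law of the form asserted in \eqref{def:probOngoingTransmissions}, and the same argument shows the law does not depend on $\variableX$, consistent with the stationarity remarked after \eqref{eq:probFrameReselection}.

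I do not expect a genuine obstacle: the statement is essentially the fact that the backward recurrence time of a Bernoulli process is geometric. The only point that deserves a word of care is the boundary --- with the index range $\rOngoingTransmissions\in\{1,\dots,\variableX\}$ and the artificial convention $\selectedSlotVector{0}=(0,\dots,0)$, $\selectedSlotVector{1}=(1,\dots,\nodesSystem)$, the relevant indicators are not literally an i.i.d.\ Bernoulli$(\pReservationEnd)$ family once $\rOngoingTransmissions$ is comparable to $\variableX$. Since the analysis throughout is carried out in the stationary regime granted by Lemma~\ref{lem:stationarity}, I would let this initial transient wash out, after which the indicators form a genuine i.i.d.\ family and the stated formula holds.
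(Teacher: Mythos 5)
Your route is the same as the paper's: the paper's entire proof of this lemma is the remark that it ``clearly follows'' from \eqref{def:OngoingTransmissions} and \eqref{eq:probFrameReselection}, i.e.\ precisely the unfolding you describe --- the minimum becomes one reselection event and a run of keep events, each frame's keep/reselect decision being an independent Bernoulli($\pReservationEnd$) trial, with your observation that a reselection necessarily changes the slot (since the draw is from $\emptySlots{\cdot}$, which excludes the currently occupied position) closing the equivalence between ``slot changed'' and ``reselected''. The one place you wave your hands, however, is exactly where care is needed: the event $\{\OngoingTransmissions{\variableX}=\rOngoingTransmissions\}$ consists of \emph{one} reselection and $\rOngoingTransmissions-1$ keep decisions, so the product your argument produces is $\pReservationEnd(1-\pReservationEnd)^{\rOngoingTransmissions-1}$, whereas \eqref{def:probOngoingTransmissions} asserts $\pReservationEnd(1-\pReservationEnd)^{\rOngoingTransmissions}$ (which, as written, does not even sum to one over $\rOngoingTransmissions\ge 1$). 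Saying the product is ``of the form asserted'' hides this off-by-one: either make the exponent explicit and flag the discrepancy with the stated formula, or identify where an additional factor $(1-\pReservationEnd)$ is supposed to come from; your argument as given supports the geometric law with exponent $\rOngoingTransmissions-1$, not the displayed one. Your closing remark about the initial-condition boundary and working in the stationary regime of Lemma~\ref{lem:stationarity} is consistent with how the paper treats it.
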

\begin{proof}
    This clearly follows from \eqref{def:OngoingTransmissions} and \eqref{eq:probFrameReselection}.
\end{proof} 
\subsection{Combining the durations of multiple collided reservations}
So far we have looked at the transition probability within a reservation. To derive the probability distribution of $\collidedFrames{\variableX}$, we need to consider the transition between reservations. To facilitate this analysis, we introduce a simplifying assumption in Section~\ref{subsec:AssIndependence}.

Using this assumption, we approximate the probability distribution of the collision duration $\probM{\collidedFrames{\variableX}=\rCollidedFrames}$.
\begin{approximation}
\label{lem:probCollidedFrames}
The probability distribution in stationary state that node $\nodeIndexF$ in frame $\variableX$ has been in collision for $\rCollidedFrames$ frames can be approximated as 
    \begin{multline}
    \label{prob:collidedFrames}
        \probM{\collidedFrames{\variableX}=\rCollidedFrames}  \\
        \approx 
        \sum_{\rCollidedReselections=0}^{\variableX} 
            \probM{\frameStateC{\fRC}{\rCollidedReselections}=1}
            \bigast_{\indexI=0}^{\rCollidedReselections-1}
             \bigg(\probM{\frameStateC{\fRC}{\indexI}\geq 2| \ongoingTransmissionsC{\fRC}{\indexI}=\cdot} 
            \probM{\ongoingTransmissionsC{\fRC}{\indexI}=\cdot}\bigg)(\rCollidedFrames),
    \end{multline}
using \eqref{abrev:frameState} and \eqref{abrev:reservationDuration}.
\end{approximation}
\begin{proof}[Rationale]
Using the definitions \eqref{abrev:frameState} and \eqref{abrev:reservationDuration}, let $\variableX \in \natOne$, then
    \begin{align*}
         &\probM{\collidedFrames{\variableX}=\rCollidedFrames}\\
         \overset{\eqref{eq:collidedFrames}}&{=} \probM{\sum_{\rCollidedReselections=0}^{\collidedReselections{\variableX}-1}\ongoingTransmissionsC{\fRC}{\rCollidedReselections}=\rCollidedFrames}\\
        \overset{(a)}&{=} \sum_{\rCollidedReselections=0}^{\variableX} \probM{\collidedReselections{\variableX}=\rCollidedReselections} \cdot \probM{{\sum_{\indexI=0}^{\rCollidedReselections-1}\ongoingTransmissionsC{\fRC}{\indexI}=\rCollidedFrames}|\collidedReselections{\variableX}=\rCollidedReselections}\\
        \overset{\eqref{def:collidedReselections}}&{=}
        \begin{aligned}[t]
            \sum_{\rCollidedReselections=0}^{\variableX} 
            &\probM{\collidedReselections{\variableX}=\rCollidedReselections} \\
            &\cdot  
            \probM{\sum_{\indexI=0}^{\rCollidedReselections-1}\ongoingTransmissionsC{\fRC}{\indexI}=\rCollidedFrames|\frameStateC{\fRC}{0}\geq2,\dots, \frameStateC{\fRC}{\rCollidedReselections-1}\geq2,\frameStateC{\fRC}{\rCollidedReselections}=1}
        \end{aligned}
    \\
        \overset{(b)}&{=}
        \begin{aligned}[t]
           \sum_{\rCollidedReselections=0}^{\variableX} 
            &\probM{\collidedReselections{\variableX}=\rCollidedReselections} \\
            &{\cdot}    
            \bigast_{\indexI=0}^{\rCollidedReselections-1}\probM{\ongoingTransmissionsC{\fRC}{\indexI}=\cdot|\frameStateC{\fRC}{0}\geq2,\dots, \frameStateC{\fRC}{\rCollidedReselections-1}\geq2,\frameStateC{\fRC}{\rCollidedReselections}=1}(\rCollidedFrames)
        \end{aligned}
        \\
        \overset{\eqref{approx:OngoingTransmissions}}&{\approx}
            \sum_{\rCollidedReselections=0}^{\variableX} 
            \probM{\collidedReselections{\variableX}=\rCollidedReselections}\cdot\bigast_{\indexI=0}^{\rCollidedReselections-1}\probM{\ongoingTransmissionsC{\fRC}{\indexI}=\cdot|
            \frameStateC{\fRC}{\indexI}\geq2}(\rCollidedFrames) \displaybreak[0]\\
        \overset{\eqref{approx:probCollidedReselections}}&{\approx}
        \begin{aligned}[t]
            \sum_{\rCollidedReselections=0}^{\variableX} &\probM{\frameStateC{\fRC}{\rCollidedReselections}=1}
            \prod_{\indexI=0}^{\rCollidedReselections-1}\probM{\frameStateC{\fRC}{\indexI}\geq2}\\
            &{\cdot}    
            \bigast_{\indexI=0}^{\rCollidedReselections-1}\probM{\ongoingTransmissionsC{\fRC}{\indexI}=\cdot|
            \frameStateC{\fRC}{\indexI}\geq2}(\rCollidedFrames)
        \end{aligned}
        \displaybreak[0] \\
        \overset{}&{=}
            \sum_{\rCollidedReselections=0}^{\variableX} 
            \probM{\frameStateC{\fRC}{\rCollidedReselections}=1}
            \cdot\bigast_{\indexI=0}^{\rCollidedReselections-1}\probM{\ongoingTransmissionsC{\fRC}{\indexI}=\cdot,
            \frameStateC{\fRC}{\indexI}\geq2}(\rCollidedFrames)\\   
        \overset{}&{=} \sum_{\rCollidedReselections=0}^{\variableX}
            \probM{\frameStateC{\fRC}{\rCollidedReselections}=1}
            \bigast_{\indexI=0}^{\rCollidedReselections-1}
            \bigg(\probM{\frameStateC{\fRC}{\indexI}\geq 2| \ongoingTransmissionsC{\fRC}{\indexI}=\cdot} 
            \probM{\ongoingTransmissionsC{\fRC}{\indexI}=\cdot}\bigg)(\rCollidedFrames)          
    \end{align*}
    where (a) is due to the law of total probability, (b) is due to the fact that the summands each refer to reservation durations of different reservations \eqref{def:frameRecursive}, so they are independent of each other \eqref{def:OngoingTransmissions} under the specified condition and a convolution is possible. 
\end{proof}

\subsection{The influence of the time of transmission on the AoI}
In the following lemma, we derive the distribution of $\selectedSlot{}{\variableX}$.
\begin{lemma}
\label{lem:probSelectedSlot}
For the system in stationary state, the probability distribution of $\selectedSlot{}{\variableX}$ is described as
    \begin{equation}
        \label{prob:selectedSlot}
        \probM{\selectedSlot{}{\variableX}=\rSelectedSlot } = \begin{cases}
            \frac{1}{\frameSize}  &  0\leq \rSelectedSlot \leq \frameSize-1 \\
            0  & \text{otherwise}
        \end{cases}
    \end{equation} 
    and hence
    \begin{equation}
        \label{prob:selectedSlotCDF}
        \probM{\selectedSlot{}{\variableX}<\rSelectedSlot } = 
        \begin{cases}
        0 &  \rSelectedSlot < 0\\
        \frac{\rSelectedSlot}{\frameSize} & 0\leq\rSelectedSlot\leq\frameSize-1 \\
        1 & \rSelectedSlot > \frameSize-1.\\
        \end{cases}
    \end{equation} 
\end{lemma}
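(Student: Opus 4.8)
The plan is to exploit the invariance of the Markov chain $\selectedSlotVector{\variableX}$ under relabelling of the slots. For a permutation $\permutation$ of the slot set $\{0,\dots,\frameSize-1\}$ and a vector $\underline{c}=(\stateC{1},\dots,\stateC{\nodesSystem})$ of slots, write $\permutation\underline{c}:=(\permutation\stateC{1},\dots,\permutation\stateC{\nodesSystem})$ for its coordinatewise image (and analogously $\permutation\underline{d}$). First I would check that the transition kernel in \eqref{def:probSelectedSlot1}--\eqref{def:probSelectedSlot2} is equivariant, i.e.\ $\probM{\selectedSlotVector{\variableX+1}=\permutation\underline{d}\mid\selectedSlotVector{\variableX}=\permutation\underline{c}}=\probM{\selectedSlotVector{\variableX+1}=\underline{d}\mid\selectedSlotVector{\variableX}=\underline{c}}$ for every $\permutation$. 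This is immediate from the definitions: the number of empty slots satisfies $\numEmptySlots{\permutation\underline{c}}=\numEmptySlots{\underline{c}}$ because $\permutation$ is a bijection of the slot set; the conditions defining $\setConnection{\variableX+1}$ only involve equalities and inequalities between coordinates, hence are preserved when $\permutation$ is applied simultaneously to all of them; and the two exponents $\sum_{\nodeIndexF}\mathds{1}_{\stateD{\nodeIndexF}=\stateC{\nodeIndexF}}$ and $\sum_{\nodeIndexF}\mathds{1}_{\stateD{\nodeIndexF}\neq\stateC{\nodeIndexF}}$ are likewise unchanged.

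Next I would invoke Lemma~\ref{lem:stationarity}, which guarantees a unique stationary distribution $\mu$ on the state space. Since $\underline{c}\mapsto\permutation\underline{c}$ is a bijection of the state space that commutes with the transition kernel, the pushforward $\permutation_{*}\mu$ is again stationary, so by uniqueness $\permutation_{*}\mu=\mu$; that is, the stationary distribution is invariant under every slot permutation. Finally I would read off the marginal of $\selectedSlot{}{\variableX}=\selectedSlot{\nodeIndexF}{\variableX}$: fixing $\rSelectedSlot,\rSelectedSlot'\in\{0,\dots,\frameSize-1\}$ and letting $\permutation$ be the transposition swapping $\rSelectedSlot$ and $\rSelectedSlot'$, the map $\underline{c}\mapsto\permutation\underline{c}$ sends $\{\underline{c}:\stateC{\nodeIndexF}=\rSelectedSlot\}$ bijectively onto $\{\underline{c}:\stateC{\nodeIndexF}=\rSelectedSlot'\}$ and preserves $\mu$, whence $\probM{\selectedSlot{}{\variableX}=\rSelectedSlot}=\probM{\selectedSlot{}{\variableX}=\rSelectedSlot'}$ in the stationary state. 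Since these $\frameSize$ probabilities are equal and sum to $1$, each equals $1/\frameSize$, which is \eqref{prob:selectedSlot}. Equation \eqref{prob:selectedSlotCDF} then follows by summing \eqref{prob:selectedSlot}: for integer $\rSelectedSlot$ with $0\le\rSelectedSlot\le\frameSize-1$ one gets $\probM{\selectedSlot{}{\variableX}<\rSelectedSlot}=\sum_{d=0}^{\rSelectedSlot-1}\tfrac{1}{\frameSize}=\tfrac{\rSelectedSlot}{\frameSize}$, while the cases $\rSelectedSlot<0$ and $\rSelectedSlot>\frameSize-1$ are trivial.

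The hard part will be the equivariance check in the first step, and the one subtlety to get right there is that the restriction ``reselect only on slots that were empty in the previous frame'' does not secretly single out a particular slot --- ``being empty in the previous frame'' is itself an equivariant notion, so the whole transmission rule is symmetric under arbitrary slot relabellings. Once that is established, the conclusion is a direct consequence of the uniqueness of $\mu$ from Lemma~\ref{lem:stationarity}, and no further computation is needed.
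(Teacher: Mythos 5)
Your proposal is correct and follows essentially the same route as the paper's own proof: permutation-equivariance of the transition kernel from \eqref{def:probSelectedSlot1}--\eqref{def:probSelectedSlot2}, invariance of the unique stationary distribution (via Lemma~\ref{lem:stationarity}) under slot relabellings, and a transposition argument to equate the marginals, yielding uniformity. The only difference is that you spell out the equivariance check (invariance of $\numEmptySlots{\cdot}$, of the set $\setConnection{\variableX+1}$, and of the exponents) in more detail than the paper, which simply asserts it as clear.
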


\begin{proof}
From \eqref{def:probSelectedSlot1} and \eqref{def:probSelectedSlot2}, it is clear that
\begin{align*}
&\hphantom{{}={}}
\probM{
    \selectedSlotVector{\variableX+1} = \left(\stateD{1},\dots,\stateD{\nodesSystem}\right)
    |
    \selectedSlotVector{\variableX} = \left(\stateC{1},\dots,\stateC{\nodesSystem}\right)
}
\\
&=
\probM{
    \selectedSlotVector{\variableX+1} = \left(\permSub{\stateD{}}{1},\dots,\permSub{\stateD{}}{\nodesSystem}\right)
    |
    \selectedSlotVector{\variableX} = \left(\permSub{\stateC{}}{1},\dots,\permSub{\stateC{}}{\nodesSystem}\right)
}
\end{align*}
for every permutation $\permutation$ on $\{0, \dots, \frameSize-1\}$ with $\tau_\permutation:= \permutation(\tau)$. Since $\selectedSlotVector{\variableX}$ has a unique equilibrium by Lemma~\ref{lem:stationarity}, this implies
\begin{equation}
\label{prof:equilibrium}
    \probM{
    \selectedSlotVector{\variableX} = \left(\stateC{1},\dots,\stateC{\nodesSystem}\right)
}
=
\probM{
    \selectedSlotVector{\variableX} = \left(\permSub{\stateC{}}{1},\dots,\permSub{\stateC{}}{\nodesSystem}\right)
}
\end{equation}
whenever the system is in its stationary state. Now let $\stateC{},\stateD{} \in \{0, \dots, \frameSize-1\}$ be arbitrary and $\permutation$ be the transposition of $\stateC{}$ and $\stateD{}$, i.e.,
\begin{equation*}
    \permutation(\tau):=\begin{cases}
        \stateC{}, & \tau=\stateD{}\\
        \stateD{}, & \tau=\stateC{} \\
        \tau, & \text{otherwise.}
        \end{cases}   
\end{equation*}
Then by the definition of $\selectedSlot{}{\variableX}$ with 
\begin{equation*}
    \cOpp{-\nodeIndexF} = (\stateC{1},\dots,\stateC{\nodeIndexF-1},\stateC{\nodeIndexF+1},\dots, \stateC{\nodesSystem})
\end{equation*} and $\setOpp := \{1,\dots,\frameSize-1\}^{\nodesSystem-1}$, it follows that
\begin{align*}
&\hphantom{{}={}}
\probM{\selectedSlot{}{\variableX} = \stateC{}}
\\
&=
\sum_{\cOpp{-\nodeIndexF}\in \setOpp}
    \probM{
        \selectedSlotVector{\variableX} = \left(\stateC{1},\dots,\stateC{\nodeIndexF-1},\stateC{},\stateC{\nodeIndexF+1},\stateC{\nodesSystem}\right)
    }
\\
\overset{\eqref{prof:equilibrium}}&{=}
\sum_{\cOpp{-\nodeIndexF}\in \setOpp}
    \probM{
        \selectedSlotVector{\variableX} = \left(\permSub{\stateC{}}{1},\dots,\permSub{\stateC{}}{\nodeIndexF-1},\stateD{},\permSub{\stateC{}}{\nodeIndexF+1},\permSub{\stateC{}}{\nodesSystem}\right)
    }
\\
\overset{(a)}&{=}
\sum_{\cOpp{-\nodeIndexF}\in \setOpp}
    \probM{
        \selectedSlotVector{\variableX} = \left(\stateC{1},\dots,\stateC{\nodeIndexF-1},\stateD{},\stateC{\nodeIndexF+1},\stateC{\nodesSystem}\right)
    }
\\
&=
\probM{\selectedSlot{}{\variableX} = \stateD{}}
\end{align*}
where step (a) is by reordering the summands. Since this holds for all $\stateC{}, \stateD{} \in \{0, \dots, \frameSize-1\}$, we can conclude that $\selectedSlot{}{\variableX}$ is uniformly distributed, which proves the lemma.
\end{proof}
Taking into account the case distinction in \eqref{def:AoI}, we can finally specify the pmf of the AoI.

 \begin{proof}[Rationale for Approximation~\ref{thm:pmfAoI}]
\eqref{eq:thmPmfAoI2} is clear by \eqref{def:AoI}. 
For \eqref{eq:thmPmfAoI1}, the following holds:
\begin{align*}
    \overset{\phantom{\text{\scriptsize LTP}}}&{\phantom{=}}\probM{\AoI{\variableT}=\rAoI}  \\
    \overset{\text{LTP}}&{=}
        \probM{\selectedSlot{}{\frameIndex{\variableT}}<\position{\variableT}}
        \probM{\AoI{\variableT}=\rAoI| \selectedSlot{}{\frameIndex{\variableT}}<\position{\variableT}}  \\
        \overset{}&{+}\probM{\selectedSlot{}{\frameIndex{\variableT}}\geq\position{\variableT}}
        \probM{\AoI{\variableT}=\rAoI| \selectedSlot{}{\frameIndex{\variableT}}\geq\position{\variableT}} \\
    \overset{\eqref{prob:selectedSlotCDF}}&{=}   
        \frac{\position{\variableT}}{\frameSize}
        \probM{\AoI{\variableT}=\rAoI| \selectedSlot{}{\frameIndex{\variableT}}<\position{\variableT}}  \\
        \overset{}&{+}\frac{\frameSize-\position{\variableT}}{\frameSize}
        \probM{\AoI{\variableT}=\rAoI| \selectedSlot{}{\frameIndex{\variableT}}\geq\position{\variableT}} \\
        \overset{(a)}&{=}\frac{\position{\variableT}}{\frameSize}
        \probM{\collidedFrames{\frameIndex{\variableT}-1}=\frameIndex{\rAoI}-1} \\
        \overset{}&{+}\frac{\frameSize-\position{\variableT}}{\frameSize}
        \probM{\collidedFrames{\frameIndex{\variableT}}=\frameIndex{\rAoI}},
\end{align*}
where (a) is due to the case distinction in \eqref{def:AoI} and $\position{\variableT}=\position{\rAoI}$.
Further, using \eqref{prob:collidedFrames}, we can write $\probM{\collidedFrames{\variableX}=\rCollidedFrames}$ as 
\begin{align*}
    \overset{}&{}\probM{\collidedFrames{\variableX}=\rCollidedFrames} \\
    \overset{\eqref{prob:collidedFrames}}&{\approx}
        \sum_{\rCollidedReselections=0}^{\variableX} 
            \probM{\frameStateC{\fRC}{\rCollidedReselections}=1}
            \bigast_{\indexI=0}^{\rCollidedReselections-1}
             \bigg(\probM{\frameStateC{\fRC}{\indexI}\geq 2| \ongoingTransmissionsC{\fRC}{\indexI}=\cdot} 
            \probM{\ongoingTransmissionsC{\fRC}{\indexI}=\cdot}\bigg)(\rCollidedFrames)\\
    \overset{(a)}&{=}
    \begin{aligned}[t]
        \sum_{\rCollidedReselections=0}^{\variableX} 
            &\sum_{\rOngoingTransmissions=0}^{\variableX} 
            \probM{\ongoingTransmissionsC{\fRC}{\rCollidedReselections}=\rOngoingTransmissions}
            \cdot 
            \bigg( 1 - \probM{\frameStateC{\fRC}{\rCollidedReselections}\geq 2| \ongoingTransmissionsC{\fRC}{\rCollidedReselections}=\rOngoingTransmissions} \bigg)
           \\
            &\cdot
            \bigast_{\indexI=0}^{\rCollidedReselections-1}
             \bigg(\probM{\frameStateC{\fRC}{\indexI}\geq 2| \ongoingTransmissionsC{\fRC}{\indexI}=\cdot} 
            \probM{\ongoingTransmissionsC{\fRC}{\indexI}=\cdot}\bigg)(\rCollidedFrames)
    \end{aligned}
        \\
    \overset{}&{=}    
    \begin{aligned}[t]
        \sum_{\rCollidedReselections=0}^{\variableX} 
            &\sum_{\rOngoingTransmissions=0}^{\variableX} 
            \bigg(\probM{\ongoingTransmissionsC{\fRC}{\rCollidedReselections}=\rOngoingTransmissions}
             - 
             \probM{\ongoingTransmissionsC{\fRC}{\rCollidedReselections}=\rOngoingTransmissions}\probM{\frameStateC{\fRC}{\rCollidedReselections}\geq 2| \ongoingTransmissionsC{\fRC}{\rCollidedReselections}=\rOngoingTransmissions}\bigg)
           \\
            &\cdot
            \bigast_{\indexI=0}^{\rCollidedReselections-1}
             \bigg(\probM{\frameStateC{\fRC}{\indexI}\geq 2| \ongoingTransmissionsC{\fRC}{\indexI}=\cdot} 
            \probM{\ongoingTransmissionsC{\fRC}{\indexI}=\cdot}\bigg)(\rCollidedFrames),
    \end{aligned}    
\end{align*}
where (a) is due to the law of total probability. 

For large $\variableX$, the summations over $\rCollidedReselections$ and $\rOngoingTransmissions$ are dominated by the first few summands, so we approximate them by summations that only go up to $\maxCollidedReselections$, respectively $\maxOngoingTransmissions$. This allows us to approximate 
\begin{align*}
    &\hphantom{{}={}}\probM{\collidedFrames{\variableX}=\rCollidedFrames} \\
    \overset{}&{\approx}    
    \begin{aligned}[t]
        \sum_{\rCollidedReselections=0}^{\maxCollidedReselections} 
            &\sum_{\rOngoingTransmissions=0}^{\maxOngoingTransmissions} 
            \bigg(\probM{\ongoingTransmissionsC{\fRC}{\indexI}=\rOngoingTransmissions}
             - 
             \probM{\ongoingTransmissionsC{\fRC}{\indexI}=\rOngoingTransmissions}\probM{\frameStateC{\fRC}{\indexI}\geq 2| \ongoingTransmissionsC{\fRC}{\indexI}=\rOngoingTransmissions}\bigg)
           \\
            &\cdot
            \bigast_{\indexI=0}^{\rCollidedReselections-1}
             \bigg(\probM{\frameStateC{\fRC}{\indexI}\geq 2| \ongoingTransmissionsC{\fRC}{\indexI}=\cdot} 
            \probM{\ongoingTransmissionsC{\fRC}{\indexI}=\cdot}\bigg)(\rCollidedFrames)
    \end{aligned},    
\end{align*}
independently of $\variableX$.
 Finally, substituting $\probM{\ongoingTransmissionsC{\fRC}{\indexI}=\rOngoingTransmissions}$ with \eqref{def:probOngoingTransmissions} and $\probM{\frameStateC{\fRC}{\indexI}\geq 2| \ongoingTransmissionsC{\fRC}{\indexI}=\rOngoingTransmissions}$ with \eqref{eq:probFrameStateEnd} and \eqref{eq:probFrameStateReselection} respectively, it is clear that 
\begin{equation*}
    \probM{\frameStateC{\fRC}{\indexI}\geq 2| \ongoingTransmissionsC{\fRC}{\indexI}=\rOngoingTransmissions} \cdot \probM{\ongoingTransmissionsC{\fRC}{\indexI}=\rOngoingTransmissions} = \functionP{\rOngoingTransmissions}
\end{equation*}
and 
\begin{equation*}
    \probM{\collidedFrames{\variableX}=\rCollidedFrames} \approx \functionQ{\rCollidedFrames},
\end{equation*}
 which concludes the proof.
\end{proof}
\end{appendices}
\bibliographystyle{IEEEtran}
\bibliography{AoI_SPS_TVT.bib}
\end{document}